\documentclass[11pt]{article}

\usepackage{natbib}


\usepackage[top=1in, bottom=1in, left=1in, right=1in]{geometry}
\usepackage[nolists]{endfloat}
\usepackage{setspace}

\usepackage{amsmath}
\usepackage{amsthm}
\usepackage{amssymb}
\usepackage{graphicx}

\usepackage{mathrsfs} 


\title{Diversity, disparity, and evolutionary rate estimation \\ for unresolved Yule trees}

\author{Forrest W. Crawford and Marc A. Suchard}

\date{Typeset \today}


\newcommand{\pd}[2]{\frac{\partial #1}{\partial #2}}

\newcommand{\od}[2]{\frac{\mathrm{d} #1}{\mathrm{d} #2}}

\newcommand{\dx}[1]{\ \mathrm{d} #1}
\newcommand{\E}{\mathbb{E}}

\renewcommand{\Re}{\text{Re}}

\newcommand{\argmax}[1]{\underset{#1}{\operatorname{argmax}}\ }


\begin{document}

\maketitle
		
\doublespacing

\begin{abstract}
\noindent The branching structure of biological evolution confers statistical dependencies on phenotypic trait values in related organisms.  For this reason, comparative macroevolutionary studies usually begin with an inferred phylogeny that describes the evolutionary relationships of the organisms of interest.  The probability of the observed trait data can be computed by assuming a model for trait evolution, such as Brownian motion, over the branches of this fixed tree.  However, the phylogenetic tree itself contributes statistical uncertainty to estimates of other evolutionary quantities, and many comparative evolutionary biologists regard the tree as a nuisance parameter.  In this paper, we present a framework for analytically integrating over unknown phylogenetic trees in comparative evolutionary studies by assuming that the tree arises from a continuous-time Markov branching model called the Yule process.  To do this, we derive a closed-form expression for the distribution of phylogenetic diversity, which is the sum of branch lengths connecting a set of taxa.  We then present a generalization of phylogenetic diversity which is equivalent to the expected trait disparity in a set of taxa whose evolutionary relationships are generated by a Yule process and whose traits evolve by Brownian motion.  We derive expressions for the distribution of expected trait disparity under a Yule tree.  Given one or more observations of trait disparity in a clade, we perform fast likelihood-based estimation of the Brownian variance for unresolved clades.  Our method does not require simulation or a fixed phylogenetic tree.  We conclude with a brief example illustrating Brownian rate estimation for thirteen families in the Mammalian order Carnivora, in which the phylogenetic tree for each family is unresolved. \\[1em]

\noindent\textbf{Keywords}: 
Brownian motion,
Comparative method,
Markov reward process,
Phylogenetic diversity,
Pure-birth process,
Quantitative trait evolution,
Trait disparity,
Yule process
\end{abstract}


\section{Introduction}

Evolutionary relationships between organisms induce statistical dependencies in their phenotypic traits \citep{Felsenstein1985Phylogenies}.  Closely related species that have been evolving separately for only a short time will generally have similar trait values, and species whose most recent common ancestor is more distant will often have dissimilar trait values \citep{Harvey1991Comparative}.  However, the origins of phenotypic diversity are still poorly understood \citep{Eldredge1972Punctuated,Gould1977Punctuated,Ricklefs2006Time,Bokma2010Time}.  Even simple idealized models of evolutionary change can give rise to highly varying phenotype values \citep{Foote1993Contributions,Sidlauskas2007Testing}, and researchers disagree about the relative importance of time, the rate of speciation, and the rate of phenotypic evolution in generating phenotypic diversity \citep{Ricklefs2004Cladogenesis,Purvis2004Evolution,Ricklefs2006Time}.

Comparative phylogenetic studies seek to explain phenotypic differences between groups of taxa, and stochastic models of evolutionary change have assisted in this task.  Researchers often treat phenotypic evolution as a Brownian motion process occurring independently along the branches of a \emph{fixed} macroevolutionary tree \citep{Felsenstein1985Phylogenies}. In comparative studies, the Brownian motion model of trait evolution has a convenient consequence: given an evolutionary tree topology and branching times, the trait values at the concurrently observed tips of the tree are distributed according to a multivariate normal random variable.  Brownian motion on a fixed phylogenetic tree is the basis for the most popular regression-based methods for comparative inference and hypothesis testing \citep{Grafen1989Phylogenetic,Garland1992Procedures,Martins1997Phylogenies,Blomberg2003Testing,OMeara2006Testing,Revell2010Phylogenetic}.  In the regression approach, inference of evolutionary parameters of interest becomes a two-step process: first, one must infer a phylogenetic tree; then, \emph{conditional on that tree}, one estimates relevant evolutionary parameters, usually by maximizing likelihood of the observed trait data under the model for trait evolution.  Unfortunately, the uncertainty involved in estimating the tree propagates into the comparative analysis in a way that is difficult to account for (but see \citet{Stone2011Why}), and comparative researchers often lack a precise phylogenetic tree on which to base a regression analysis of trait data.  Modern techniques for dealing with this issue generally resort to simulation.  Some researchers simulate a large number of possible trees and estimate parameters conditional on a single representative tree, such as the maximum clade credibility tree (see, for example, \citet{Alfaro2009Nine}).  Alternative approaches that utilize simultaneous simulation of trees and parameters via Bayesian methods are gaining in popularity \citep{Sidlauskas2007Testing,Slater2012Fitting,Drummond2012Bayesian}.

However, simulation methods can be extremely slow and may require assumptions about prior distributions of unknown parameters that are difficult to justify.  Indeed, in macroevolutionary studies, the phylogenetic tree is often not of interest \emph{per se}, but must be taken into account in order to accurately model the dependency of the traits under consideration.  Many comparative phylogeneticists regard the evolutionary tree as a nuisance parameter in the larger evolutionary statistical model.  For this reason, there is increased interest in tree-free methods of comparative analysis that preserve information about the variance of phenotypic values within unresolved clades \citep{Bokma2010Time}.  

To develop a method for comparative inference in evolutionary studies that does not rely on a particular tree, it is convenient to specify a \emph{generative} model for phylogenetic trees.  In the Yule (pure-birth) process, every existing species independently gives birth with instantaneous rate $\lambda$; when there are $n$ species, the total rate of speciation is $n\lambda$ \citep{Yule1925Mathematical}.  The Yule process is widely used as a null model in evolutionary hypothesis testing and can provide a plausible prior distribution on the space of evolutionary trees in Bayesian phylogenetic inference \citep{Nee1994Reconstructed,Rannala1996Probability,Nee2006Birth}.  One can easily derive finite-time transition probabilities \citep{Bailey1964Elements}, and efficient methods exist to simulate samples from the distribution of Yule trees, conditional on tree age, number of species, or both \citep{Stadler2011Simulating}.  Interestingly, some researchers have pointed out that even the simple Yule process can have unexpected properties that may be relevant in evolutionary theory and reconstruction \citep{Gernhard2008Stochastic,Steel2010Expected}.  

Due to Yule trees' simple Markov branching structure and analytically tractable transition probabilities, many researchers have made progress in characterizing summary properties of the Yule process -- that is, integrating over all Yule tree realizations.  For example, \citet{Steel2002Shape} study aspects of the shape of phylogenies under the Yule model, such as the distribution of the number of edges separating a subset of the extant taxa from the MRCA; \citet{Gernhard2008Stochastic} find distributions of branch lengths; \citet{Steel2010Expected} study the expected length of pendant and interior edges of Yule trees; and \citet{Steel2001Properties} and \citet{Mulder2011Probability} study the distribution of the number of internal nodes separating taxa.  

One important summary statistic for trees in biodiversity applications is \emph{phylogenetic diversity} (PD), defined as the sum of all branch lengths in the minimum spanning tree connecting a set of taxa \citep{Faith1992Conservation}.  Applied researchers in evolutionary biology have found PD to be useful in conservation and biodiversity applications; see, e.g., \citet{Webb2002Phylogenies}, \citet{Moritz2002Strategies}, and \citet{Turnbaugh2008Core}.  PD also has the virtue of being a mathematically tractable statistic for phylogenetic trees, and has attracted interest from researchers interested in its properties.  For example, \citet{Faller2008Distribution} show that the asymptotic distribution (as the number of taxa $n\to\infty$) of PD is normal and give a recursion for computing the distribution of PD where edge lengths are integral.  \citet{Mooers2011Branch} discuss branch lengths on Yule trees and expected loss of PD in conservation applications.  Most importantly for our study, \citet{Stadler2011Distribution} find the moment-generating function for PD conditional on $n$ extant taxa and tree age $t$ under the Yule model.  Following on these inspiring results, we seek now to study analytic properties of Yule trees that are useful for comparative evolutionary studies.

In this paper, we present a framework for computing probability distributions related to diversity and quantitative trait evolution over unresolved Yule trees and describe methods for estimating related parameters.  We first give a mathematical description of  the Yule model of speciation and briefly discuss its properties.  Next, we introduce the Markov reward process, a probabilistic method for deriving probability distributions related to the accumulation of diversity under a Yule model.  In Theorem 1, we give an expression for the probability distribution of PD under a Yule model, conditional on the number of species $n$, time to the most recent common ancestor (TMRCA) or tree age $t$, and speciation rate $\lambda$.  We then demonstrate an important and previously unappreciated relationship between \emph{trait disparity}, the sample variance for a group of taxa \citep{OMeara2006Testing}, and PD for traits evolving on a Yule tree via Brownian motion.  Theorem 2 gives an expression for the distribution of expected trait disparity when integrating over the branch lengths of a Yule tree.  Next, we describe a statistical method for performing fast maximum likelihood estimation of Brownian variance, given an unresolved clade and observed trait disparity.  Our approach does not require fixing a phylogenetic tree or specification of prior probabilities for unknown parameters.  The method is simulation-free and does not seek to infer branch lengths or ancestral states.  We show empirically that our estimators are asymptotically consistent.  We conclude with an application of our method to body size evolution in the Mammalian order Carnivora.


\section{Mathematical background}

To aid in exposition, we briefly establish some notation. Denote the \emph{topology} of a phylogenetic tree by $\tau$.  A topology is the shape of a tree, disregarding branch lengths or age.  We always condition our calculations on the phylogenetic tree having age $t$ with $n$ extant taxa.  Let $t_1,t_2,\ldots$ denote the branching points of a tree, where $t_k$ is the time of branching from $k$ to $k+1$ lineages.  We measure time in the forward direction, so at the TMRCA, $t=0$.  This is in keeping with our mechanistic orientation: the Yule process, to be developed below, runs forward in time from 0 to $t$.

\subsection{Yule processes}

Let $Y(t) \in \{1,2,\ldots\}$ be a Yule process with birth rate $\lambda$ that keeps track of the number of species at time $t$.  The transition probabilities $P_{mn}(t) = \Pr(Y(t)=n\mid Y(0)=m)$ satisfy the Kolmogorov forward equations
\begin{equation}
\begin{split}
 \od{P_{m1}(t)}{t} &= -\lambda P_{m1}(t), \quad\text{and}   \\
 \od{P_{mn}(t)}{t} &= -\lambda n P_{mn}(t) + (n-1)\lambda P_{m,n-1}(t)
\end{split}
\label{eq:yuleodes}
\end{equation}
for $n\geq 1$.  This infinite system of ordinary differential equations can be solved to yield closed forms for the finite-time transition probabilities,
\begin{equation}
P_{mn}(t) = \binom{n-1}{m-1}e^{-m\lambda t} (1-e^{-\lambda t})^{n-m}, 
\label{eq:yuleprob}
\end{equation}
which have a negative binomial form \citep{Bailey1964Elements}.  In the Yule process, we are only concerned with the number of species that exist at any moment in time, not their genealogy.  That is, we assume that the lineage that branches is chosen uniformly from all extant lineages.  The transition probability \eqref{eq:yuleprob} is useful for performing statistical inference: suppose we know the branching rate $\lambda$ and the age $t$ of a tree, and we observe $Y(t)=n$. Then \eqref{eq:yuleprob} gives the \emph{likelihood} of our observation.  Figure \ref{fig:treestep} shows an example realization of a Yule tree, with the corresponding counting process diagram below.  In this example, $\lambda=2$, $Y(0)=2$, and $Y(t=1)=12$.

\begin{figure}
\centering
\includegraphics[width=0.7\textwidth]{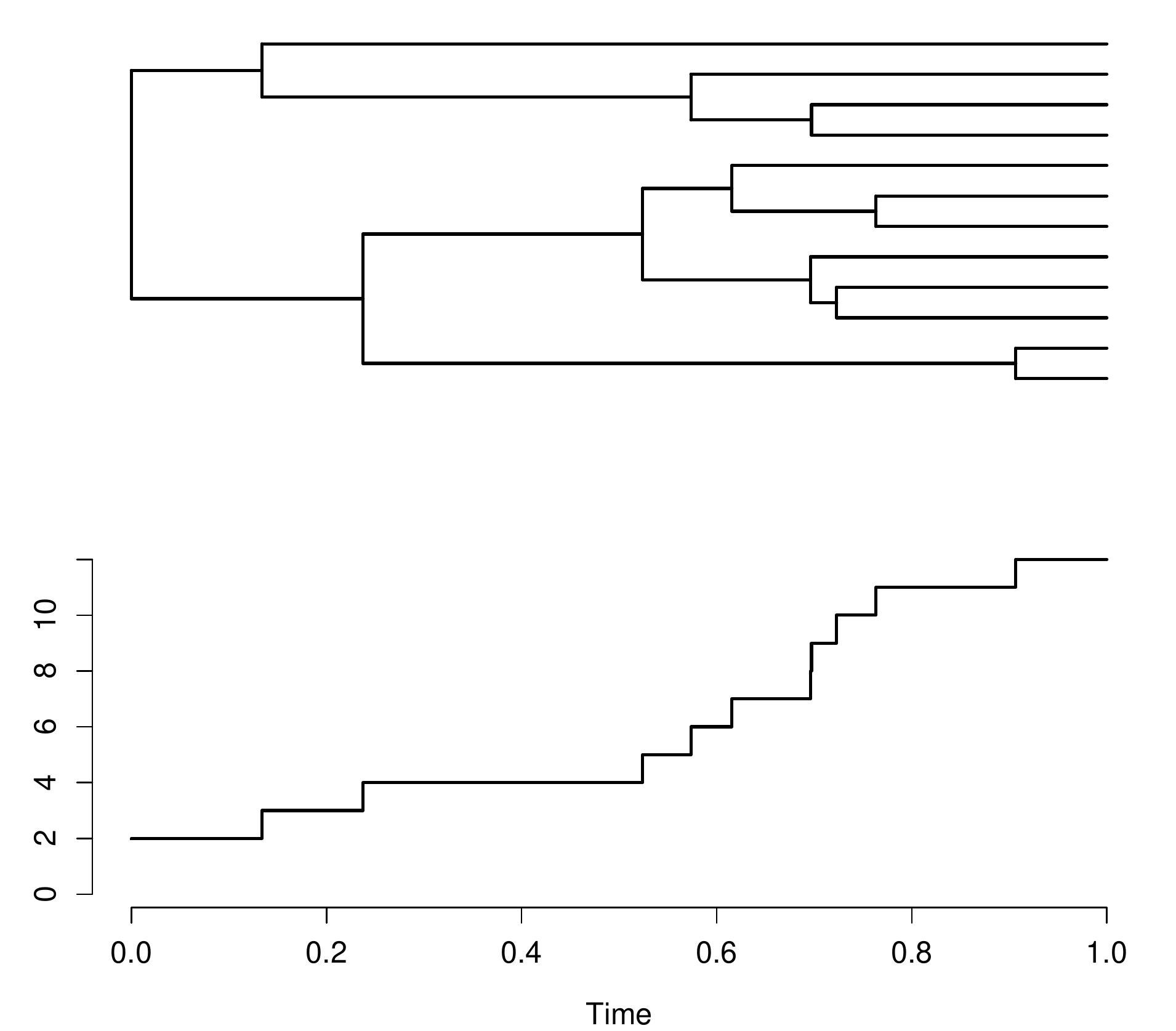}
\caption[Example of a Yule process]{Example of a Yule (pure-birth) tree with the corresponding counting process $Y(t)$ below.  The birth rate in this example is $\lambda=2$.  In the counting process representation of this realization, we only keep track of the total number of species in existence at each time.  }  
\label{fig:treestep}
\end{figure}

%

\subsection{Markov reward processes}

In a Markov reward process, a non-negative reward $a_k$ accrues for each unit of time a Markov process spends in state $k$ \citep{Neuts1995Algorithmic}.  Consider a Yule process $Y(s)$ beginning at $Y(0)=1$ and ending at $Y(t)=n$. The accumulated reward up to time $t$ is 
\begin{equation}
 R_t = \int_0^t a_{Y(s)} \dx{s} .
\label{eq:rewardint}
\end{equation}
When $Y(s)$ is observed continuously from time $0$ to $t$, the process $a_{Y(s)}$ is a fully-observed step function, and $R_t$ can be easily computed as the area under that function.  To illustrate, suppose that the process makes jumps at times $t_1, \ldots, t_{n-1}$, and we define $t_0=0$ and $t_n=t$.  We assume $Y(s)$ is right-continuous, so $Y(t_i) = i+1$.  Then at time $t$, the accumulated reward is
\begin{equation}
 R_t = \sum_{i=1}^n a_{Y(t_{i-1})} (t_i-t_{i-1}) = \sum_{i=1}^n a_i(t_i-t_{i-1}) .
\label{eq:Rtdiff}
\end{equation}
When only $Y(0)$ and $Y(t)$ are observed, it can be challenging to compute the distribution of $R_t$.  In our proofs of Theorems 1 and 2, we appeal to the method developed by \citet{Neuts1995Algorithmic} and \citet{Minin2008Counting} to find reward probabilities conditional on $Y(0)$ and $Y(t)$.  Let 
\begin{equation}
  v_{mn}(x,t) = \Pr(R_t=x,Y(t)=n \mid Y(0)=m)
\end{equation}
be the joint probability that the reward at time $t$ is $x$ and the process is in state $n$, given that the process began in state $m$ at time $0$.  This joint probability formulation is more mathematically convenient than the more natural conditional probability, as we demonstrate in the proofs of the Theorems.  However, it is easy to transform $v_{mn}(x,t)$ into the conditional probability via Bayes' rule, as we show below.  Appendix \ref{app:reward} gives a preliminary lemma deriving a representation for Yule reward processes that will be useful in proving the Theorems that follow.

\section{The distribution of phylogenetic diversity in a Yule process} 

The Yule process is a simple and analytically tractable mechanistic model for producing the birth times of a clade.  If we assume that the species that undergoes speciation is chosen randomly from the extant species at that time, then the Yule process is also a distribution over bifurcating trees of age $t$.  The Markov rewards framework provides a technique to understand integrals over Yule processes in precisely this context.  In this section, we study the distribution of PD in trees generated by a Yule process.  PD depends only on the branching times, and not the underlying topology, of the phylogenetic tree, making it a suitable first step in our goal of integrating over trees in comparative studies.  

To proceed, let $Y(s)$ be a Yule process with branching rate $\lambda$ that keeps track of the number of lineages at time $s$.  We seek an expression for PD, the total branch length of the tree, which is equivalent to the area under the trajectory of the counting process $Y(s)$.  Define a Markov reward process with $Y(s)$ and $a_k=k$ for $k=1,2,\ldots$.  Then 
\begin{equation}
 R_t = \int_0^t a_{Y(s)} \dx{s} = \int_0^t Y(s) \dx{s} .
\end{equation}
We now state our first Theorem giving an expression for the distribution of $R_t$ in a Yule process.
\newtheorem{thm}{Theorem}
\begin{thm}
\label{thm:yule}
For a Yule process with birth rate $\lambda$, starting at $Y(0)=m$ and ending at $Y(t)=n$,
\begin{equation}
   v_{mn}(x,t) = \begin{cases} \delta(x-mt)e^{-m\lambda t} & m=n \\[1em]
  \displaystyle\frac{\lambda^{n-m} e^{-\lambda x}}{(n-m-1)!} \sum_{j=m}^n \binom{n-1}{j-1} \binom{j-1}{m-1} (-1)^{j-m} (x-j t)^{n-m-1}  H(x-j t) & n>m \end{cases}
\label{eq:yulerewardprob}
\end{equation}
where $\delta(x)$ is the Dirac delta function and $H(x)$ is the Heaviside step function.
\end{thm}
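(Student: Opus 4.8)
The plan is to pass to the Laplace transform in the reward variable and reduce the problem to solving a simple triangular linear system of ODEs. Set $\mu = \lambda+\rho$ and define $\hat v_{mn}(\rho,t) = \int_0^\infty e^{-\rho x} v_{mn}(x,t)\dx{x} = \E\!\left[e^{-\rho R_t}\,\mathbf{1}\{Y(t)=n\}\mid Y(0)=m\right]$. By the Markov reward machinery of \citet{Neuts1995Algorithmic} and \citet{Minin2008Counting}, which is what the preliminary lemma of Appendix~\ref{app:reward} supplies, $\hat v_{mn}(\rho,t)$ is the $(m,n)$ entry of $\exp\!\big[(Q-\rho\,\mathrm{diag}(1,2,3,\dots))\,t\big]$, where $Q$ is the Yule generator; equivalently, the $\hat v_{mn}$ solve $\od{\hat v_{mn}}{t} = -n\mu\,\hat v_{mn} + (n-1)\lambda\,\hat v_{m,n-1}$ with $\hat v_{mm}(\rho,0)=1$, $\hat v_{mn}(\rho,0)=0$ for $n\neq m$, and $\hat v_{mn}\equiv 0$ for $n<m$. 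This is the Yule system \eqref{eq:yuleodes} with the decay rate $n\lambda$ replaced by $n\mu$. The diagonal case gives $\hat v_{mm}(\rho,t)=e^{-m\mu t}=e^{-m\rho t}e^{-m\lambda t}$, whose inverse transform is the atom $\delta(x-mt)e^{-m\lambda t}$, which is the first line of \eqref{eq:yulerewardprob}.

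For $n>m$ I would then guess, and verify by induction on $n$, the closed form
\[
  \hat v_{mn}(\rho,t) = \left(\frac{\lambda}{\lambda+\rho}\right)^{\!n-m} \sum_{j=m}^{n} \binom{n-1}{j-1}\binom{j-1}{m-1}(-1)^{j-m}\,e^{-j(\lambda+\rho)t}.
\]
The base case $n=m$ is immediate; the initial condition $\hat v_{mn}(\rho,0)=0$ for $n>m$ reduces to $\sum_j \binom{n-1}{j-1}\binom{j-1}{m-1}(-1)^{j-m} = \binom{n-1}{m-1}(1-1)^{n-m}=0$ via the subset-of-a-subset identity $\binom{n-1}{j-1}\binom{j-1}{m-1}=\binom{n-1}{m-1}\binom{n-m}{j-m}$. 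For the inductive step, substituting the ansatz into $\od{\hat v_{mn}}{t}+n\mu\hat v_{mn}=(n-1)\lambda\hat v_{m,n-1}$ makes the left side contribute a factor $(n-j)$ to the $j$-th term, killing $j=n$, and matching coefficients of $e^{-j\mu t}$ on the two sides comes down to the single identity $(n-j)\binom{n-1}{j-1}=(n-1)\binom{n-2}{j-1}$.

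The last step is term-by-term Laplace inversion. Since $\mathcal{L}^{-1}\!\big[(\rho+\lambda)^{-(n-m)}\big](x)=\frac{x^{n-m-1}}{(n-m-1)!}e^{-\lambda x}$ and the factor $e^{-j\rho t}$ in each summand translates $x$ by $jt$ and introduces $H(x-jt)$ -- while the accompanying $e^{-j\lambda t}$ exactly cancels the $e^{\lambda j t}$ produced by that translation -- each summand inverts to $\frac{(x-jt)^{n-m-1}}{(n-m-1)!}e^{-\lambda x}H(x-jt)$. Pulling $\lambda^{n-m}$ out front gives the second line of \eqref{eq:yulerewardprob}.

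The main obstacle is producing the right closed form for $\hat v_{mn}$: once the double-binomial-with-alternating-sign coefficient is in hand, both the ODE verification and the inversion are routine bookkeeping. If one prefers to derive it rather than guess, one solves the recursion by repeated integration and obtains a partial-fraction expansion in $\mu$ of the resulting rational-exponential expression, at which point essentially the same binomial identities are needed to collapse it to the stated form. A secondary point demanding care is the distributional ($\delta$) part of the $m=n$ case together with the behavior of the inversion at the switching points $x=jt$; one should note that the finite sum makes term-by-term inversion unproblematic and check that no spurious atoms arise when $n>m$.
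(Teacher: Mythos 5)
Your proof is correct, and it shares the paper's overall architecture: both arguments Laplace-transform $v_{mn}$ in the reward variable, invoke the triangular ODE system supplied by Lemma~\ref{lem:rewardodes} with $a_k=k$ (your $\hat v_{mn}(\rho,t)$ is the paper's $f_{mn}(r,t)$ and your system is exactly \eqref{eq:yulerewardlt}), arrive at the identical intermediate transform \eqref{eq:fmnrtyule}, and finish with the same term-by-term inversion in $r$, including the cancellation of $e^{-j\lambda t}$ against the exponential produced by the shift $x\mapsto x-jt$. Where you genuinely diverge is the middle step: the paper takes a \emph{second} Laplace transform in $t$, solves the resulting algebraic recurrence for $g_{mn}(r,s)$, and expands by partial fractions in $s$; you instead verify the closed form for $\hat v_{mn}(\rho,t)$ by induction directly in the time domain, using the subset-of-a-subset identity for the initial condition and $(n-j)\binom{n-1}{j-1}=(n-1)\binom{n-2}{j-1}$ for the ODE -- both identities check out, and uniqueness of the linear IVP closes the induction. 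Your observation that the transformed system is the Yule system with decay rate $n\mu$, $\mu=\lambda+\rho$, in fact removes the need to guess at all: substituting $\hat v_{mn}=(\lambda/\mu)^{n-m}Q_{mn}$ turns the system into the rate-$\mu$ Yule forward equations, so $Q_{mn}$ is the negative binomial \eqref{eq:yuleprob} with $\lambda$ replaced by $\mu$, and a binomial expansion of $(1-e^{-\mu t})^{n-m}$ yields your double-binomial alternating sum directly. The trade-off is that the paper's double-transform-plus-partial-fractions route is mechanical and carries over verbatim to arbitrary rewards in Theorem~\ref{thm:generalyule}, whereas your shortcut exploits the special structure $a_k=k$ and would not generalize; as a proof of Theorem~\ref{thm:yule} alone, however, it is complete and somewhat more elementary.
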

\noindent The proof of this Theorem is given in Appendix \ref{app:yule}.  There has been disagreement about whether the definition of PD in different contexts includes the root lineage \citep{Faith1992Conservation,Faith2006Phylogenetic,Crozier2006Conceptual,Faith2006Role}.  We do not take a stance on this issue but note that if $t$ is the stem age of an unresolved clade, then taking $a_1=1$ in \eqref{eq:rewardint} includes the root in the distribution of accumulated PD, and $a_1=0$ does not.  The form of \eqref{eq:yulerewardprob} will change slightly if $m=1$ and $a_1=0$.  

The probability distribution of PD, conditional on $Y(0)=m$ and $Y(t)=n$, is
\begin{equation}
 f_Y(x\mid m,n,t,\lambda) = \frac{v_{mn}(x,t)}{P_{mn}(t)} ,
\label{eq:fYyule}
\end{equation}
where $P_{mn}(t)$ is the Yule transition probability \eqref{eq:yuleprob}.  This family of probability distributions has some interesting properties.  Figure \ref{fig:yulerewards} shows $f_Y(x|m,n,t,\lambda)$ for $m=1$ (with $a_1=1$), $n=1,\ldots,8$, $\lambda=1.2$, and $t=1$.  The unusual shape of the distribution for smaller $n$ demonstrates the piecewise nature of the density, apparent in the functional form \eqref{eq:yulerewardprob}.  Interestingly, \citet{Faller2008Distribution} show that the distribution PD tends toward a normal distribution as $n\to\infty$, a fact suggested by the shape of the distributions in Figure \ref{fig:yulerewards}.  

\begin{figure}
\begin{center}
\includegraphics[width=\textwidth]{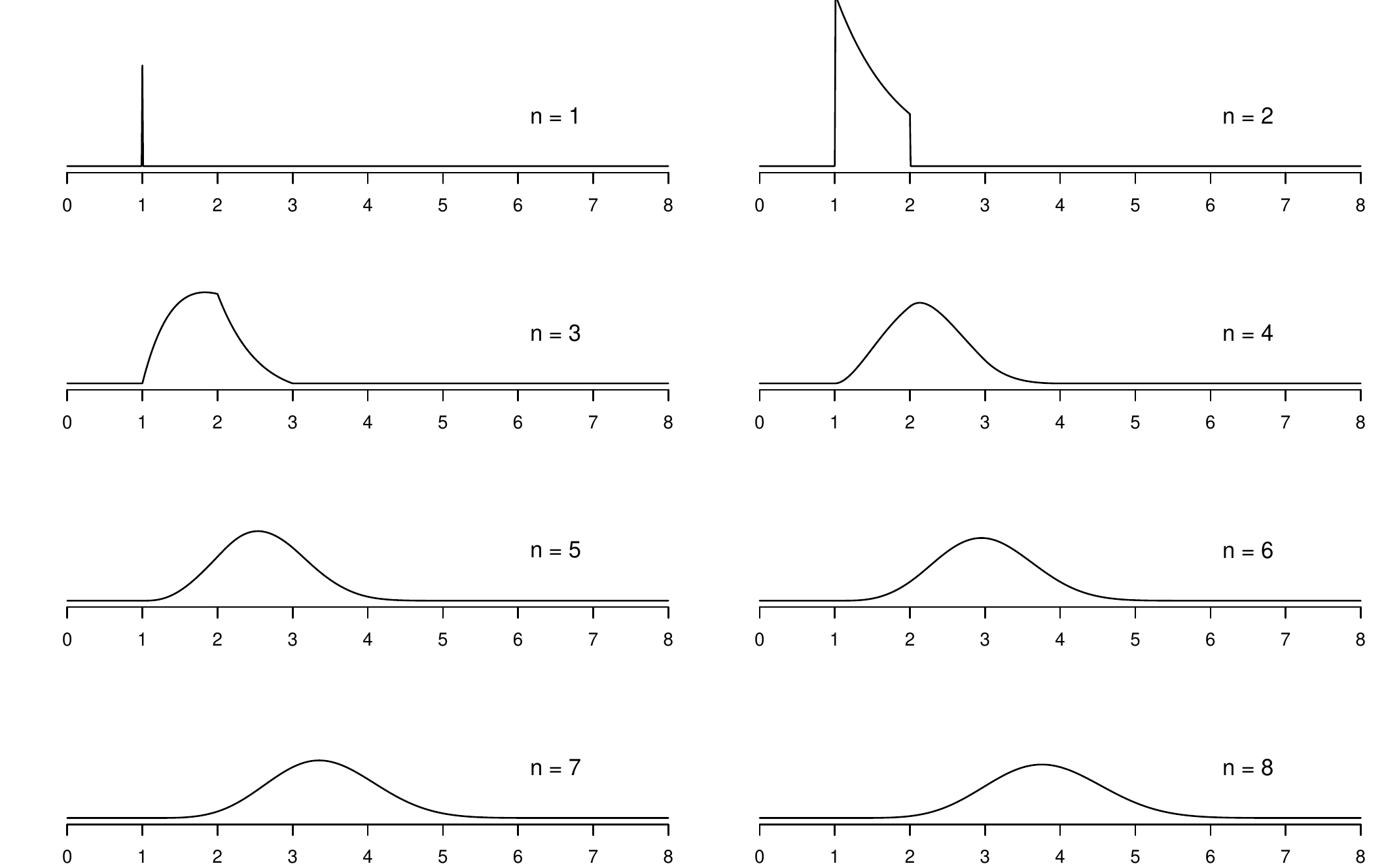}
\end{center}
\caption[Probability density of PD under the Yule process]{Probability densities of phylogenetic diversity (PD), or total branch length, under the Yule process starting at $Y(0)=1$, ending at $Y(t)=n$ for $n=1,\ldots,8$, with $t=1$ and $\lambda=1.2$.  When $n=1$, no births have occurred, so the accrued PD must be exactly $x=t=1$, which we represent here as a point mass at 1.  For $n=2$, the minimum accumulated PD is one, since the process spent at most one unit of time with one species; likewise the maximum accumulated PD is two, since the process spent at most one unit of time with two species. The functional form of \eqref{eq:yulerewardprob} reveals the piecewise nature of the density, which gradually becomes smoother as $n$ becomes large. The vertical probability axis is the same for all plots. }
\label{fig:yulerewards}
\end{figure}

These distributions have some practical uses.  First, one can predict the PD that will arise under the Yule model from a collection of extant species up to time $t$ in the future.  Second, we can calculate the probability that future PD at time $t$ in one group is greater than in the other, conditional on the number of species and diversification rates in both groups; this probability may have uses in conservation applications.  Third, conditional on an inferred phylogenetic tree for a set of $n$ taxa, one could compute the resulting PD $x$ and perform a hypothesis test to evaluate the Yule-PD model using the quantity
\begin{equation}
 \Pr(\text{PD} > x) = \int_x^{nt} f_Y(x) \dx{x} 
\end{equation}
where $f_Y(x)$ is given by \eqref{eq:fYyule} and $nt$ is the maximum PD that can accumulate in time $t$, conditional on $Y(t)=n$.

\section{The distribution of expected phenotypic variance}

\label{sec:disparity}

Since researchers generally do not know the phylogenetic tree for a set of species with certainty, PD is not observable until after a tree has been estimated.  Unfortunately, the uncertainty involved in estimating a tree propagates into subsequent estimates of PD based on that tree, and our distributional results may no longer apply.  We therefore seek a distribution for an analogous quantity that is observable directly from knowledge of the number of species $n$ and their trait values, bypassing the need to infer a detailed phylogenetic tree.  For this, we will need a model for phenotypic trait evolution on the branches of an unknown phylogenetic tree generated by a Yule process.  

The simplest and most popular model for evolution of continuous phenotypic traits on phylogenetic trees is Brownian motion \citep{Felsenstein1985Phylogenies}.  Under this model, trait increments over a branch of length $t$ are normally distributed with mean $0$ and variance $\sigma^2 t$.  The trait values for extant species at the present time are observed as the vector $\mathbf{X}=(X_1,\ldots,X_n)$.  For a given topology $\tau$ with $n$ taxa and branching times $\mathbf{t}=(t_2,\ldots,t_{n-1})$, with tip data $\mathbf{X}$ generated on the branches of this tree by zero-mean Brownian motion with variance $\sigma^2$, the tip data are distributed according to a multivariate normal random variable.  More formally, 
\begin{equation}
\mathbf{X} \sim N\big(\mathbf{0}, \sigma^2 \mathbf{C}(\tau,\mathbf{t})\big) ,
\label{eq:traitnormal}
\end{equation}
where the entries of the variance-covariance matrix $\mathbf{C}(\tau,\mathbf{t})=\{c_{ij}\}$ are defined as follows: $c_{ii}=t$, and $c_{ij}$ is the time of shared ancestry for taxa $i$ and $j$, where $i\neq j$.  \citet{OMeara2006Testing} introduces \emph{disparity}, the sample variance of the tip data $\mathbf{X}$, 
\begin{equation}
 \text{disparity}(\mathbf{X}) = \frac{1}{n} (\mathbf{X} - \bar{X})' (\mathbf{X} - \bar{X}) ,
\end{equation}
where $\bar{X}$ is the mean of the elements of $\mathbf{X}$.  The expectation of the disparity, conditional on the tree topology $\tau$, branching times $\mathbf{t}$, and the Brownian variance $\sigma^2$, is
\begin{equation}
\begin{split}
 \E_\mathbf{X}(\text{disparity} \mid \tau,\mathbf{t},\sigma^2) &= \frac{1}{n} \E_\mathbf{X}\big((\mathbf{X} - \bar{X})' (\mathbf{X} - \bar{X}) \mid \tau,\mathbf{t}, \sigma^2\big) \\
  &= \sigma^2 \left[\frac{\text{tr}\big(\mathbf{C}(\tau,\mathbf{t})\big)}{n} - \frac{1}{n^2} \mathbf{1}' \mathbf{C}(\tau,\mathbf{t}) \mathbf{1} \right] \\
  &= \sigma^2 \left[t - \frac{1}{n^2} \mathbf{1}' \mathbf{C}(\tau,\mathbf{t}) \mathbf{1} \right] \\
  &= \sigma^2 \left[t - \frac{1}{n^2} \left(nt + 2\sum_{i=1}^n\sum_{j<i} c_{ij} \right)\right] \\
  &= \sigma^2 \left[\left(1-\frac{1}{n}\right)t - \frac{2}{n^2}\sum_{i=1}^n\sum_{j<i} c_{ij} \right]  ,
\end{split}
\label{eq:omeara}
\end{equation}
where we use the notation $\E_\mathbf{X}$ to indicate that the expectation is taken over realizations of the Brownian process that generates $\mathbf{X}$.  The fourth line above arises since the matrix $\mathbf{C}(\tau,\mathbf{t})$ is symmetric and every element on the diagonal is $t$.  However, every entry $c_{ij}$ is either zero or a branching time from the vector $\mathbf{t}$, so the nonzero terms in the sum consist of branching times $t_k$.  Let $z_k$ be the coefficient multiplying the branching time $t_k$ in the last line of \eqref{eq:omeara}.  Then we can express expected disparity as a weighted sum of the branching times,
\begin{equation}
 \E_\mathbf{X}(\text{disparity} \mid \tau,\mathbf{t},\sigma^2) = \sigma^2 \sum_{k=2}^n z_k t_k .
\label{eq:disparity}
\end{equation}
Figure \ref{fig:cov} illustrates how tree topology determines the matrix $\mathbf{C}(\tau,\mathbf{t})$ and expected disparity.

\begin{figure}
\begin{center}
\includegraphics{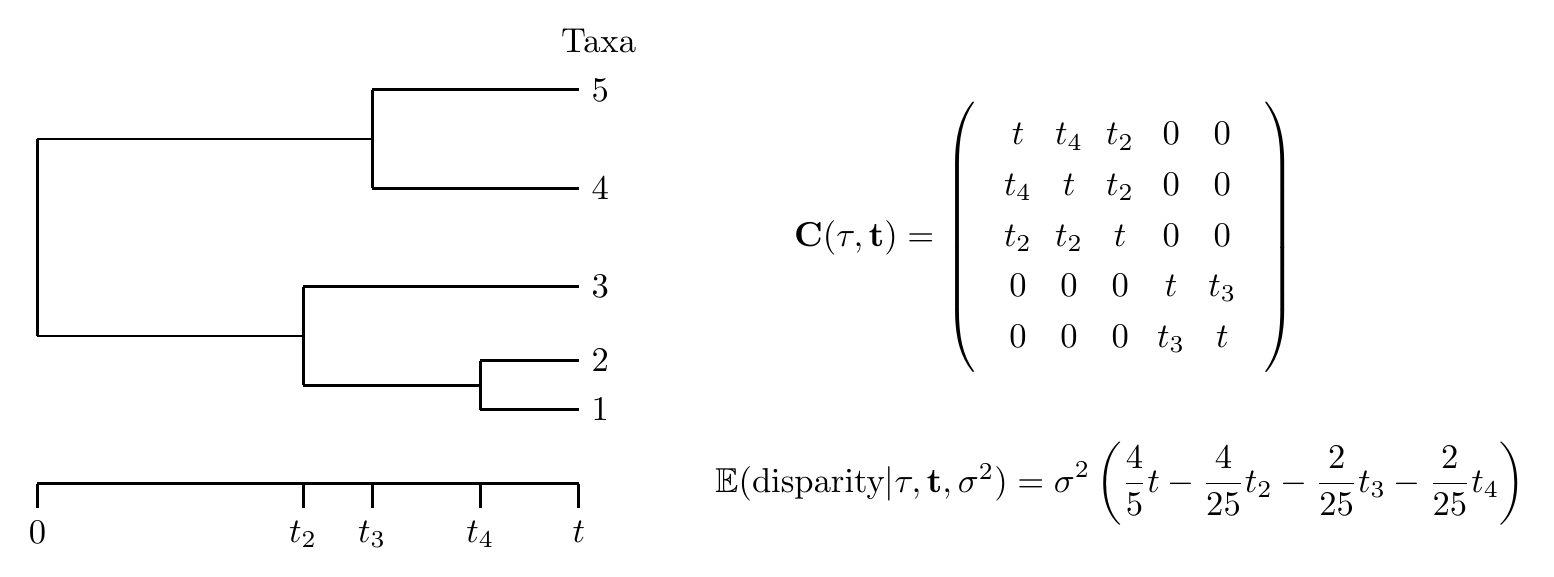}
\end{center}
\caption[How tree topology determines the phylogenetic covariance matrix]{How tree topology determines the matrix of Brownian covariances and expected disparity.  At left, a tree topology $\tau$ of crown age $t$ has $5$ taxa and branch times $\mathbf{t}=(t_2,t_3,t_4)$, where $t_k$ is the time of the branch from $k$ to $k+1$ lineages.  At right, the corresponding matrix $\mathbf{C}(\tau,\mathbf{t})$ of Brownian covariances.  The diagonal entries of $\mathbf{C}(\tau,\mathbf{t})$ are all $t$.  The $(i,j)$th entry of $\mathbf{C}(\tau,\mathbf{t})$ is the time of shared ancestry between taxa $i$ and $j$, for $i\neq j$.  For example, taxa $1$ and $2$ share ancestry for time $t_4$.  Expected disparity (using Brownian variance $\sigma^2$) is calculated using \eqref{eq:omeara}. Trait disparity cannot accumulate when there is only one species, so we draw the tree $\tau$ beginning with two lineages at time 0. }
\label{fig:cov}
\end{figure}

\subsection{Expected disparity as an accumulated reward}

The expected disparity \eqref{eq:disparity} has features in common with PD, since it is a scalar quantity that accumulates over the branches of the tree from time 0 to $t$.  The difference is that disparity implicitly incorporates tree-topological factors, which enter \eqref{eq:disparity} as weights in the sum of the branch lengths.  In addition, a Yule tree accumulates PD even when there is a single lineage, but the same is not true for disparity.  We develop these ideas in greater detail in this section.

Our goal is to express \eqref{eq:disparity} as a Markov reward process in a form equivalent to \eqref{eq:Rtdiff},
\begin{equation}
 R_t = \sigma^2 \sum_{k=1}^n a_k (t_k - t_{k-1}).
\label{eq:disparityreward}
\end{equation}
From \eqref{eq:omeara} we see that $a_n = z_n = \left(1-\frac{1}{n}\right)$, and $a_{n-1}$ can be found using $a_n$ and $z_{n-1}$, and so on.  We can formalize this recursive solution for the rewards by equating \eqref{eq:disparity} and \eqref{eq:disparityreward} as follows:
\begin{equation}
\sum_{k=2}^n z_k t_k =\sum_{k=1}^n a_k (t_k - t_{k-1}) 
 = a_nt_n + \sum_{k=1}^{n-1} (a_k - a_{k+1})t_k .
\end{equation}
Then recursively solving for the $a_k$'s gives $a_1=0$ and 
\begin{equation}
 a_k = \sum_{j=k}^n z_j.
\end{equation}
for $k=2,\ldots,n$.  Now defining $R_t(\mathbf{a})$ to be the Yule reward process with rewards $\mathbf{a}=(a_1,\ldots,a_n)$ under the topology $\tau$, the expected disparity is distributed as
\begin{equation}
\E_\mathbf{X}(\text{disparity}\mid \tau,\lambda,\sigma^2,n) \sim \sigma^2 R_t(\mathbf{a}) .
\label{eq:aMz} 
\end{equation}
Note that we no longer need to condition on the branch lengths $\mathbf{t}=(t_1,\ldots,t_n)$ in the expected disparity -- they have been ``integrated out''.  Therefore, to find the distribution of expected trait variance under a Brownian motion process on a Yule tree with topology $\tau$, we need only find the relevant rewards $\mathbf{a}$ and compute the corresponding distribution of $R_t(\mathbf{a})$.  

As a concrete example, consider the five-taxon tree in Figure \ref{fig:cov}.  The expected disparity, given this topology $\tau$ and arbitrary branch lengths $\mathbf{t}=(t_2,t_3,t_4)$, is
\begin{equation}
\E(\text{disparity}|\tau,\mathbf{t},\sigma^2) = \sigma^2\left[\frac{4}{5}t - \frac{4}{25}t_2 - \frac{2}{25}t_3 - \frac{2}{25}t_4\right] .
\end{equation}
The coefficients are given by
\begin{equation}
\mathbf{z} = \left(-\frac{4}{25},\ -\frac{2}{25},\ -\frac{2}{25},\ \frac{4}{5}\right).
\end{equation}
Solving for the rewards $\mathbf{a}$, we obtain
\begin{equation}
 \mathbf{a} = \left(0, \frac{12}{25},\ \frac{16}{25},\ \frac{18}{25},\ \frac{4}{5} \right)
\end{equation}
which is easily verified by hand.  This leads us to our second Theorem, which gives an expression for the distribution of $R_t(\mathbf{a})$.
\begin{thm}
\label{thm:generalyule}
In a Yule process with rate $\lambda$ and arbitrary rewards $\mathbf{a} = (a_1,\ldots,a_n)$, the Laplace transform of $v_{mn}(x,t|\mathbf{a})$ is given by
\begin{equation}
 f_{mn}(r,t)  = \begin{cases}
    e^{-(m\lambda + a_mr)t} & m=n\text{, and} \\[1em]
 \displaystyle\lambda^{n-m} \frac{(n-1)!}{(m-1)!} \sum_{j=m}^n \frac{e^{-(j\lambda + a_jr)t}}{\prod_{k\neq j} \big(\lambda(k - j) + r(a_k - a_j)\big)} & n>m .
\end{cases}
\label{eq:generalyule}
\end{equation}
\end{thm}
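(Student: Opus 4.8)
The plan is to follow the Markov-reward framework of \citet{Neuts1995Algorithmic} and \citet{Minin2008Counting} already used for Theorem~\ref{thm:yule}: write a forward system for the joint density $v_{mn}(x,t\mid\mathbf{a})$, Laplace-transform in the reward variable $x$ to obtain a triangular linear system of ODEs in $t$, solve that system, and identify the solution as a partial-fraction expansion. Because the Yule process is pure-birth, over an interval $[t,t+h]$ the state either holds or jumps up by one, and reward accrues at the constant rate $a_n$ while in state $n$; treating this accrual as an advection term, the usual infinitesimal argument — together with the representation supplied by the preliminary lemma of Appendix~\ref{app:reward} — gives, for $n\ge m$,
\begin{equation*}
 \pd{v_{mn}(x,t)}{t} = -a_n\,\pd{v_{mn}(x,t)}{x} - n\lambda\,v_{mn}(x,t) + (n-1)\lambda\,v_{m,n-1}(x,t),
\end{equation*}
with the convention $v_{m,m-1}\equiv 0$ and initial data $v_{mn}(x,0)=\delta(x)$ when $m=n$ and $0$ otherwise.

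Setting $f_{mn}(r,t)=\int_0^\infty e^{-rx}v_{mn}(x,t)\dx{x}$ and transforming, the advection term becomes multiplication by $a_n r$ (the boundary contribution at $x=0$ vanishes for $n>m$, since $v_{mn}(\cdot,t)$ carries no atom there; the $m=n$ case, where the mass sits at $x=a_mt$, is solved directly and yields $f_{mm}(r,t)=e^{-(m\lambda+a_mr)t}$, the first branch of \eqref{eq:generalyule}). What remains is the triangular linear system
\begin{equation*}
 \pd{f_{mn}(r,t)}{t} = -(n\lambda + a_n r)\,f_{mn}(r,t) + (n-1)\lambda\,f_{m,n-1}(r,t), \qquad f_{mn}(r,0)=0\ \text{ for }\ n>m.
\end{equation*}

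For $n>m$ the cleanest finish is a second Laplace transform, this time in $t$: with $g_{mn}(r,w)=\int_0^\infty e^{-wt}f_{mn}(r,t)\dx{t}$ the recursion becomes algebraic, $g_{mn}=\frac{(n-1)\lambda}{\,w+n\lambda+a_nr\,}\,g_{m,n-1}$ with $g_{mm}=(w+m\lambda+a_mr)^{-1}$, so that
\begin{equation*}
 g_{mn}(r,w) = \lambda^{n-m}\,\frac{(n-1)!}{(m-1)!}\,\prod_{k=m}^{n}\frac{1}{\,w+k\lambda+a_kr\,}.
\end{equation*}
A partial-fraction decomposition in $w$ (the poles $w=-(k\lambda+a_kr)$ are distinct for generic $r$, the general case following by continuity) followed by term-by-term inversion via $\mathcal{L}^{-1}_w[(w+c)^{-1}]=e^{-ct}$ produces exactly the $n>m$ branch of \eqref{eq:generalyule}, the products $\prod_{k\ne j}\big(\lambda(k-j)+r(a_k-a_j)\big)$ arising as the partial-fraction denominators. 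One could instead avoid the $t$-transform altogether: condition on the holding times $W_m,\dots,W_{n-1}$, which are independent with $W_k\sim\mathrm{Exp}(k\lambda)$, observe that on $\{\sum_k W_k\le t,\ Y(t)=n\}$ the reward is the affine function $\sum_{k=m}^{n-1}a_kW_k+a_n(t-\sum_k W_k)$, and integrate $e^{-rR_t}$ against the product of exponential densities over the simplex $\{w_k\ge0,\ \sum_k w_k\le t\}$; the iterated simplex integral returns the same sum of exponentials.

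I expect the only real obstacle to be the careful treatment of the singular initial condition and the $x=0$ boundary term: one must justify differentiating under the integral sign and the infinitesimal forward argument in the presence of the Dirac mass (the reason the $m=n$ case carries a $\delta$), and confirm that the advection term leaves no surface term after the Laplace transform in $x$. This is precisely what the preliminary lemma of Appendix~\ref{app:reward} is designed to supply; granting it, everything else reduces to the routine linear-ODE solve and partial-fraction inversion sketched above.
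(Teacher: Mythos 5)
Your proposal is correct and follows essentially the same route as the paper: invoke the preliminary lemma to obtain the triangular ODE system $\partial_t f_{mn} = -(n\lambda+a_n r)f_{mn}+(n-1)\lambda f_{m,n-1}$, take a second Laplace transform in $t$ to reduce the recursion to $g_{mn}(r,s)=\lambda^{n-m}\frac{(n-1)!}{(m-1)!}\prod_{k=m}^n(s+k\lambda+a_k r)^{-1}$, and recover the stated sum by partial fractions and term-by-term inversion. Your advection-equation derivation of the forward system and the alternative simplex-integral route are fine embellishments, but the core argument is the one in Appendix~\ref{app:generalyule}.
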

\noindent The proof of this Theorem is given in Appendix \ref{app:generalyule}. To obtain the probability distribution of the accumulated reward, we must invert \eqref{eq:generalyule},
\begin{equation}
v_{mn}(x,t) = \mathscr{L}^{-1}\big[ f_{mn}(r,t) \big](x).
\end{equation}
For $m=n$ and $n=m+1$, there are simple expressions for the inverse Laplace transform.  Under certain conditions on the rewards $\mathbf{a}$, there is a straightforward analytic inversion of \eqref{eq:generalyule} for general $n>m$, but the rewards computed using the times of shared ancestry in a phylogenetic tree do not always satisfy these conditions.  Therefore, it is often easier to numerically invert \eqref{eq:generalyule}; we discuss this issue in much greater detail in Appendix \ref{app:inversion}, and provide a straightforward method for numerical inversion of the Laplace transform \eqref{eq:generalyule} based on the method popularized by \citet{Abate1995Numerical}.


\subsection{Approximate likelihood and inference for $\sigma^2$}

We now describe a statistical procedure for using Theorem \ref{thm:generalyule} to perform statistical for the unknown Brownian variance $\sigma^2$.  Suppose that in a clade of $n$ species we have a crude tree topology $\tau$ (without branch lengths).  This topology could be derived from parsimony, distance-based tree reconstruction methods, or one could simply use family/genus/species information to assign a hierarchy of relationships and resolve polytomies randomly.  Given the tree topology $\tau$, one can compute the rewards $\mathbf{a}$.  Suppose also that we have calculated trait disparity for each of $J$ independent continuous quantitative traits that arise from Brownian motion on the branches of the unknown phylogenetic tree, starting at the root.  Let 
\begin{equation}
D_n^{(j)} = \frac{1}{n}\left(\mathbf{X}^{(j)} - \bar{X}^{(j)}\right)' \left(\mathbf{X}^{(j)} - \bar{X}^{(j)}\right) ,
\end{equation}
be the observed disparity for the $j$th phenotypic trait, where $\mathbf{X}^{(j)}$ is the vector of $n$ trait values for the $j$th phenotypic trait and $\bar{X}^{(j)}$ is the mean of the elements of $\mathbf{X}^{(j)}$.  Then we calculate the mean disparity $\bar{D}_n$ across these $J$ traits:
\begin{equation}
 \bar{D}_n = \frac{1}{J}\sum_{j=1}^J D_n^{(j)} .
\end{equation}
Note that in order to find $\bar{D}_n$, we do not need the individual trait measurements themselves -- only the disparities.  Then by the law of large numbers, $\bar{D}_n \to \E(D_n)$ as $J\to\infty$, where $D_n$ is the asymptotic mean disparity across all possible traits.  Therefore, we approximate the distribution of $\bar{D}_n$ as follows:
\begin{equation}
\begin{split}
\bar{D}_n &\approx \E(D_n) \\
          &\sim \E_\mathbf{X}(\text{disparity}\mid \tau,\sigma^2,\lambda,n,t) \\
          &= \sigma^2 R_t(\mathbf{a}) .
\end{split}
\label{eq:Dnbar}
\end{equation}
where $\mathbf{a}$ is the vector of rewards obtained from the topology $\tau$.  This approximate relation provides the connection between observable mean trait disparity and the probability distribution in Theorem \ref{thm:generalyule} that we need in order to compute the probability of the observed disparities.  Suppose the stem age of an unresolved tree is $t$, and let
\begin{equation}
 f_Y(x) = \frac{v_{mn}(x|t,\lambda,\mathbf{a})}{P_{mn}(t)} 
\end{equation}
be the distribution of expected disparity in a Yule process with general rewards $\mathbf{a}$, conditional on $Y(0)=m$ and $Y(t)=n$.   Here, $x$ is the expected trait disparity, which we approximate by our observed (and therefore fixed) $\bar{D}_n$.  From \eqref{eq:Dnbar}, we write 
\begin{equation}
 \frac{\bar{D}_n}{\sigma^2} \sim R_t(\mathbf{a})
\label{eq:approxrv}
\end{equation}
so the likelihood is approximately 
\begin{equation}
f_Y(\bar{D}_n/\sigma^2) .
\label{eq:approxlik}
\end{equation}
Finally, we propose the approximate maximum likelihood estimator 
\begin{equation}
 \hat{\sigma}^2 = \argmax{\sigma^2} f_Y(\bar{D}_n/\sigma^2) .
\label{eq:sigma2hat}
\end{equation}
To find $\hat{\sigma}^2$, note that in a Yule reward process in which $a_k<a_j$ for $k<j$, the value of the reward is constrained to lie in the interval
\begin{equation}
 a_m t\leq R_t(\mathbf{a}) \leq a_n t
\label{eq:rewardinterval}
\end{equation}
where we have assumed $Y(0)=m$ and $Y(t)=n$.  Additionally, when none of the rewards $a_k$ are zero we are justified in dividing $\bar{D}_n$ by the terms in \eqref{eq:rewardinterval} to obtain bounds on the possible value of $\hat{\sigma}^2$ that maximizes \eqref{eq:approxlik},
\begin{equation}
 \frac{\bar{D}_n}{a_nt} \leq \hat{\sigma}^2  \leq \frac{\bar{D}_n}{a_mt} .
\label{eq:sigma2hatinterval}
\end{equation}
When $m=1$ and $a_1=0$, the upper bound is infinity.  However, in practice when $m=1$ and $n\geq 4$, it is safe to assume that 
\begin{equation}
 \frac{\bar{D}_n}{a_nt} \leq \hat{\sigma}^2  \leq \frac{\bar{D}_n}{a_2t} .
\label{eq:sigma2hatinterval2}
\end{equation}
We solve \eqref{eq:sigma2hat} using the numerical Newton-Raphson method provided by the \texttt{R} function \texttt{nlm}.

We emphasize that \eqref{eq:sigma2hat} is not a traditional maximum likelihood estimator.  We have approximated the distribution of $\bar{D}_n$ by the distribution of expected disparity, giving an approximate likelihood \eqref{eq:approxlik} that may not attain its maximum at the same value of $\sigma^2$ as the true likelihood.  In addition, the density \eqref{eq:approxlik} is non-differentiable at several points, and for $n=m+1$ attains its maximum at the lower boundary $R_t(\mathbf{a})=a_mt$ (see Figure \ref{fig:yulerewards} with $m=1$, $n=2$).  These issues complicate application of traditional asymptotic theory for maximum likelihood estimates, and the classical large sample theory may not apply because the likelihood is only an approximation.  One consequence of the violation of these traditional assumptions is that we are unable to provide meaningful standard errors for $\hat{\sigma}^2$ using only the approximate likelihood \eqref{eq:approxlik}.


\subsection{Simulations}

To empirically evaluate the correctness of the analytic distributions we derived in Theorem \ref{thm:generalyule}, we simulated trait data via Brownian motion on trees generated by a Yule process with age $t=1$ and branching rate $\lambda=1$.  For $n=3,\ldots,8$, we chose one tree topology and simulated $N_\text{btimes}=2000$ sets of branch lengths from a Yule process for $n$ species \citep{Stadler2011Simulating}.  For each set of branching times, we simulated $N_\text{BM}=2000$ realizations of Brownian motion with $\sigma^2=1$ to generate trait values at the tips of the tree \citep{Paradis2004APE}.  For each of the 2000 sets of tip values, we calculated the mean disparity using \eqref{eq:omeara}.  Figure \ref{fig:sim} shows histograms of the mean disparities with the analytic distribution $f_Y(x)$ overlaid, with good correspondence.  The tree topology (with arbitrary branch lengths for display) is shown in gray above each histogram. 

\begin{figure}
\centering
\includegraphics[width=\textwidth]{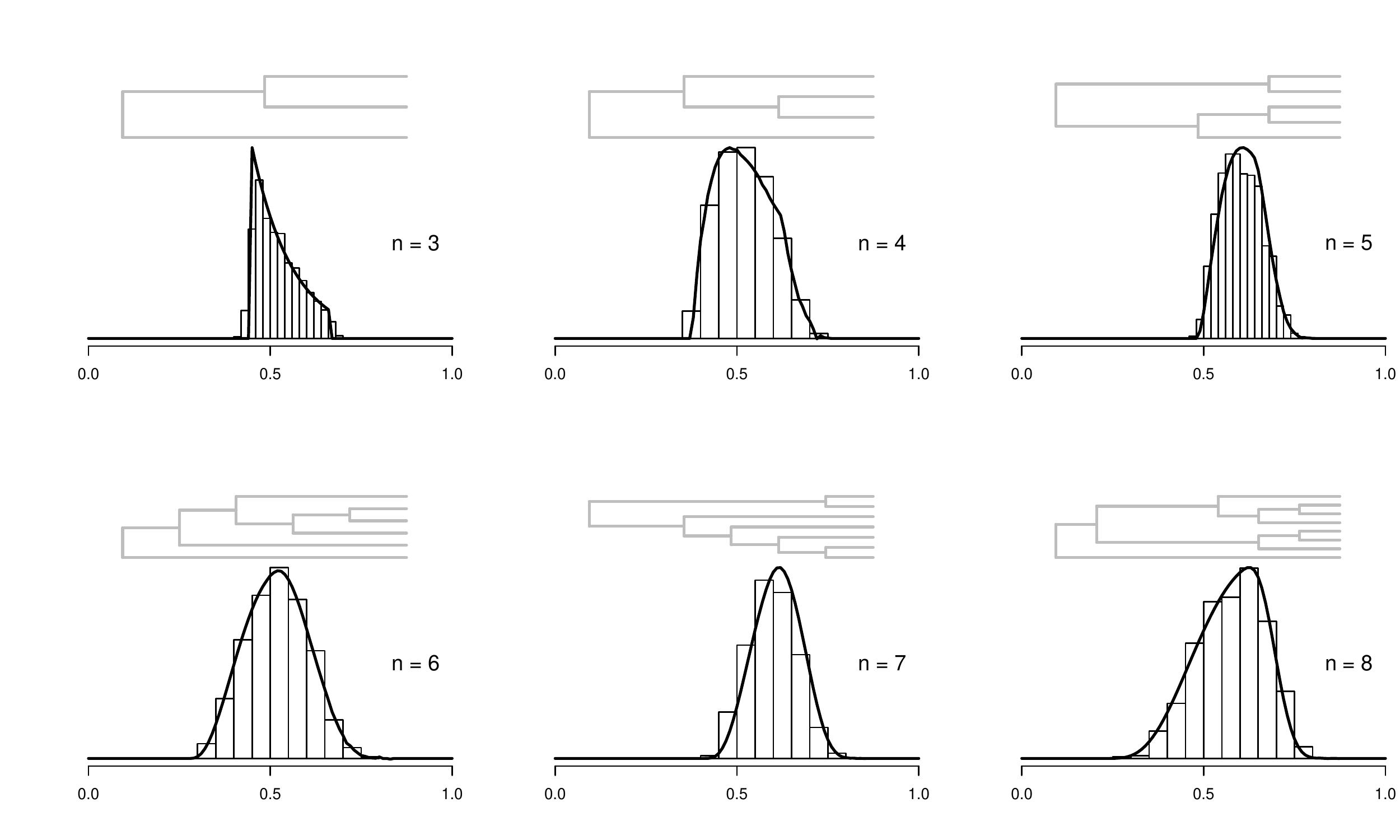}
\caption[Correspondence between simulated and calculated PD distributions]{Empirical correspondence between the derived expressions for the distribution of expected disparity and simulated mean disparity histograms for trees with different numbers of taxa.  For each $n=3,\ldots,8$, we simulated a single tree topology (shown above each histogram in gray).  We then simulated 2000 sets of branch lengths for this topology under the Yule process.  For each set of branch lengths, we calculated the mean disparity from 2000 simulations of zero-mean Brownian motion with variance $\sigma^2=1$ on this tree.}
\label{fig:sim}
\end{figure}

To evaluate our estimation methodology, we take a similar approach, but for each simulated set of mean disparities, we infer $\hat{\sigma}^2$, an approximate maximum likelihood estimate of $\sigma^2$.  Figure \ref{fig:simestimates} shows estimates of $\sigma^2$ for different species richness $n$ under different simulation conditions.  For $n=3,\ldots,12$, we generated 100 trees, each with $N_\text{btimes}=1,5,10,100$ sets of branching times.  For each set of branching times, we evolved $N_\text{BM}=1,5,10,100$ traits by Brownian motion along the branches with rate $\sigma^2=1$ and computed the mean disparity.  Then, given the $N_\text{btimes}$ mean disparities, we maximized the approximate likelihood to find $\hat{\sigma}^2$.  Each dot in Figure \ref{fig:simestimates} represents one estimate, and the dots are jittered slightly to show their density.  The variance in the estimator is large when the number of simulated branching time sets and Brownian realizations is small since \eqref{eq:Dnbar} and hence \eqref{eq:approxrv} become poor approximations to the mean disparity.  However, the approximate maximum likelihood estimator for $\sigma^2$ appears to have the desirable property of statistical consistency: the deviation of the estimates from the true value $\sigma^2=1$ goes to zero as the number of mean disparity observations becomes large.


\begin{figure}
\centering

\includegraphics{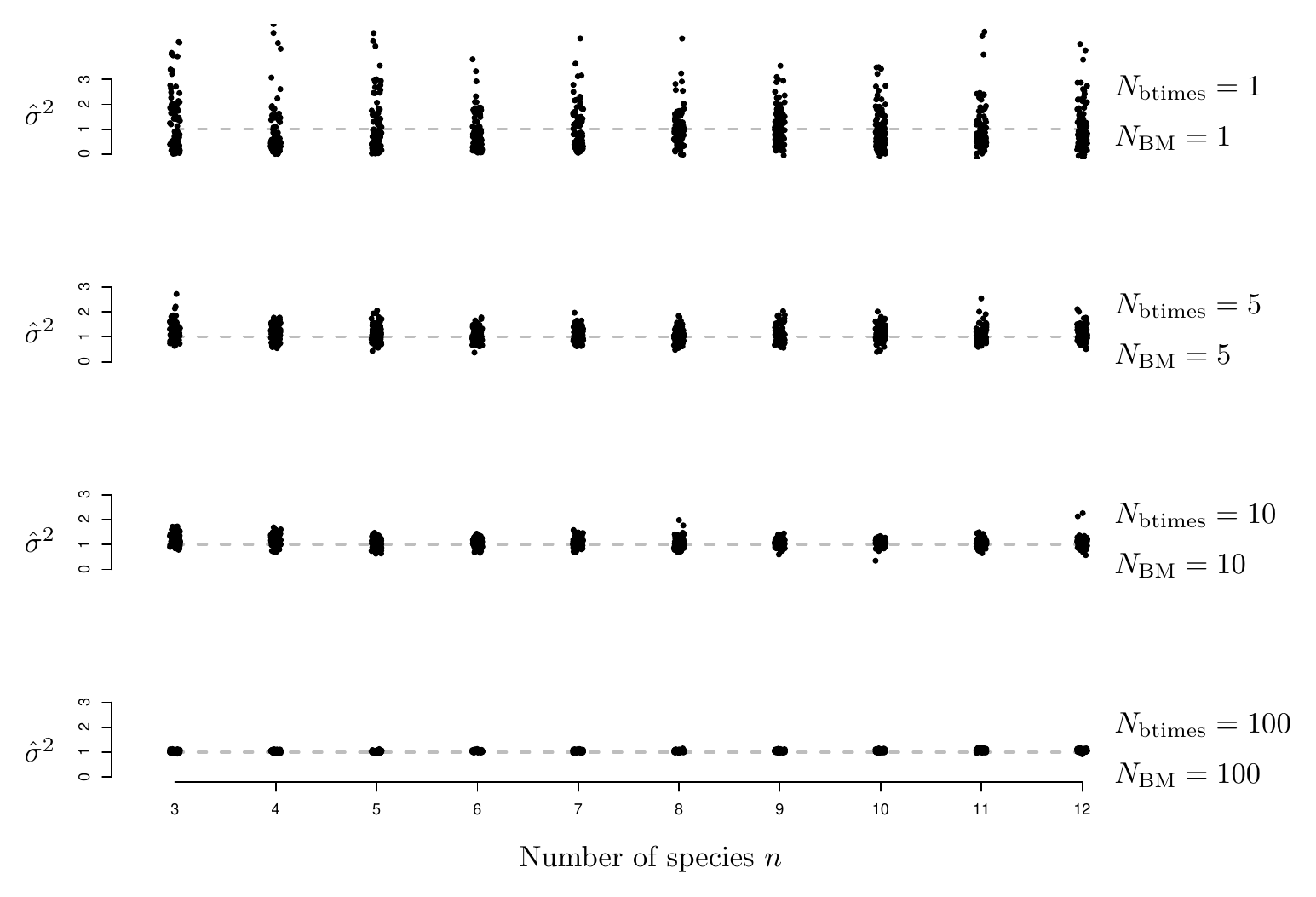}

\caption[Empirical consistency of maximum likelihood estimates]{Empirical consistency of approximate maximum likelihood estimates of $\sigma^2$. The number of species $n$ in the unobserved phylogenetic tree is shown on the horizontal axis.  Each set of plots shows 100 estimates of $\sigma^2$ from $N_\text{btimes}$ simulations of different branching times under a Yule process with $n$ species and from $N_\text{BM}$ independent realizations of Brownian motion used to compute the mean disparity for each set of branching times.  The estimates are jittered to show the sampling distribution.  The gray dotted line shows the true value $\sigma^2=1$.}
\label{fig:simestimates}
\end{figure}


\section{Application to evolution of body size in the order Carnivora}

To illustrate the usefulness of our method in practical comparative inference, we estimate thirteen family-wise Brownian variance rates for body size evolution in the mammalian order Carnivora using observed log-body size disparities and species richness information \citep{Gittleman1986Carnivore,Gittleman1998Body,Slater2012Fitting}. Carnivora includes members with very large and small body masses, including wide diversity within individual families \citep{Nowak1999Walkers}.  We included only families with 2 or more species, since families with only one species do not reveal useful information about intra-family Brownian variance. The dataset, comprising 284 species, included the families Canidae, Eupleridae, Felidae, Herpestidae, Hyaenidae, Mephitidae, Mustelidae, Otariidae, Phocidae, Prionodontidae, Procyonidae, Ursidae, and Viverridae.  Figure \ref{fig:carnivoratree} shows the backbone phylogeny (from \citet{Eizirik2010Pattern}) and the unresolved clades.  Our analysis takes advantage of utilities for manipulating trees and quantitative trait data in the \texttt{ape} package (described in \citet{Paradis2004APE}) and simulating trees with the \texttt{TreeSim} package (described in \citet{Stadler2011Simulating}), using the statistical programming language \texttt{R} \citep{CRAN2012}.  We intentionally limit our analysis to the Brownian variance in each family and use only body size disparity in order to demonstrate the simplicity of our approximate method under conditions of very little data.

\begin{figure}
\centering
\includegraphics[width=1.0\textwidth]{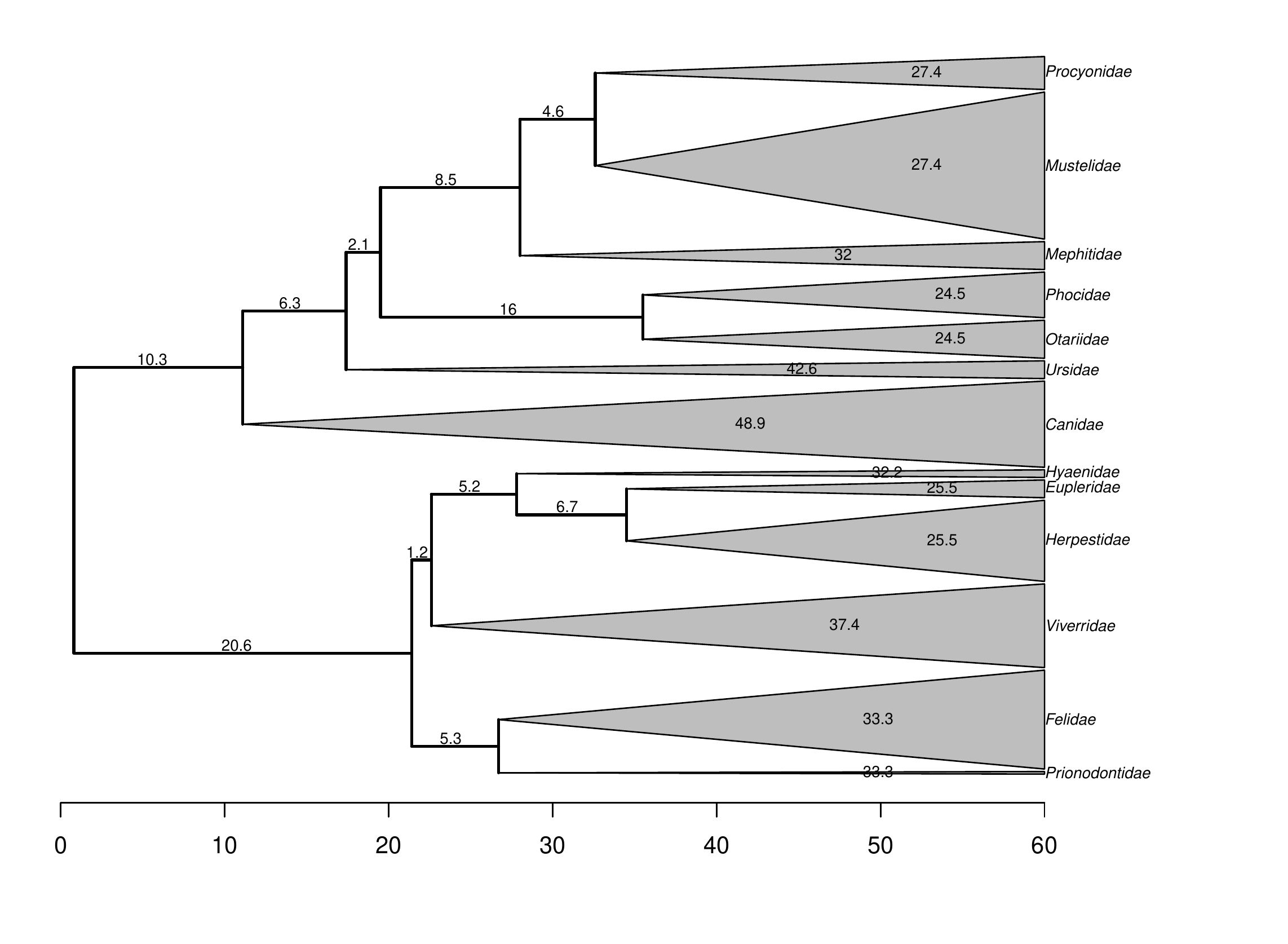}
\caption[A family-level phylogenetic tree for order Carnivora]{A family-level phylogenetic tree for order Carnivora.  The phylogeny within each family, shown here as a gray triangle, is not known with certainty.  The length of the base of the gray triangles represents the number of species in the family.  The ``backbone'' tree connecting the unresolved clades is assumed fixed.  Branch lengths are shown along each branch. }
\label{fig:carnivoratree}
\end{figure}

The first step is to estimate the speciation rate $\lambda$ from the backbone tree and the unresolved clades.  Even though the tree is unobserved within each family, we can still find the exact maximum likelihood estimate of $\lambda$ for the tree as a whole.  On a fully resolved branch of length $t$, the log-likelihood of $\lambda$ is $\log(\lambda) - \lambda t$.  For an unresolved clade of age $t$ with one species at the crown which grows to include $n$ species, the log-likelihood from \eqref{eq:yuleprob} is proportional to $-\lambda t + (n-1)\log(1-e^{-\lambda t})$.  Summing these partial log-likelihoods over the whole tree gives the log-likelihood for $\lambda$; maximizing this function, we find that $\hat{\lambda}=0.069$ per million years.  In what follows, we assume that $\lambda=\hat{\lambda}$.  To each unresolved family clade we associate the clade disparity for body size.  Our analysis will consist of estimating the family-wise Brownian variance $\sigma_j^2$ for the $j$th family, assuming one species at the stem age for each clade shown in Figure \ref{fig:carnivoratree}.  In this way, we integrate over crown ages for each family under the Yule model.  To apply the methodology for modeling expected trait disparity developed in section \ref{sec:disparity}, we regard the unresolved clades as ``soft polytomies'' in which branch lengths are unknown \citep{Purvis1993Polytomies}.  We therefore resolve these polytomies randomly without assigning branch lengths.  

Our analysis of the Carnivora family-wise Brownian variances takes approximately 30 seconds to run on a laptop computer.  However, to evaluate the variability in our estimates, we ran the analysis 100 times with randomly resolved polytomies for each family.  Table \ref{table:carnivora} shows the family name, species richness ($n$), stem age ($t$), observed body size disparity ($\bar{D}_n$), the mean estimate of $\sigma_j^2$, and the approximate standard error of the estimate for each family.  We calculated standard errors as the empirical standard deviation of the 100 Brownian variance estimates.  Our estimates of $\sigma^2$ reveal readily interpretable information about the evolution of body size in each family that is not available directly from observed disparities alone.  For example, the families Herpestidae and Viverridae have almost identical species richness, but quite different disparity measurements. Perhaps surprisingly, we have estimated nearly equal Brownian variances for the two families.  Why does our method produce such similar estimates of $\sigma^2$?  The answer lies in the ages of the clades -- Viverridae is almost 50\% older than Herpestidae.  Two clades with the same richness whose traits evolve by Brownian motion at the same rate can exhibit very different disparity measurements, depending on their ages.  This example illustrates how our method provides an approximate way to untangle the complex interaction of time, species, and observed trait variance (in the words of \citet{Ricklefs2006Time}) for unresolved clades.

\begin{table}
\begin{center}
\begin{tabular}{lccccc}
  \hline
Family & Richness & TMRCA & Disparity & $\hat{\sigma}^2$ & SE \\ 
  \hline
  Prionodontidae & 2 & 33.30 & 0.131 & 0.051 & 0 \\ 
  Felidae & 40 & 33.30 & 1.588 & 0.118 & 0.094 \\ 
  Viverridae & 34 & 37.40 & 0.606 & 0.034 & 0.018 \\ 
  Herpestidae & 33 & 25.50 & 0.482 & 0.035 & 0.017 \\ 
  Eupleridae & 8 & 25.50 & 0.916 & 0.081 & 0.010 \\ 
  Hyaenidae & 4 & 32.20 & 0.805 & 0.084 & 0.001 \\ 
  Canidae & 35 & 48.90 & 0.678 & 0.031 & 0.012 \\ 
  Ursidae & 8 & 42.60 & 0.303 & 0.025 & 0.002 \\ 
  Otariidae & 16 & 24.50 & 0.386 & 0.029 & 0.005 \\ 
  Phocidae & 19 & 24.50 & 0.751 & 0.065 & 0.030 \\ 
  Mephitidae & 12 & 32.00 & 0.570 & 0.038 & 0.004 \\ 
  Mustelidae & 59 & 27.40 & 2.263 & 0.220 & 0.239 \\ 
  Procyonidae & 14 & 27.40 & 0.531 & 0.038 & 0.006 \\ 
   \hline
\end{tabular}
\end{center}
\caption[Estimates of Brownian variance for Carnivora body size data]{Species richness, TMRCA, body size disparity, and estimated Brownian variance $\hat{\sigma}^2$ for each family in the Carnivora dataset.  Note that Canidae and Herpestidae have very different disparity measurements, but nearly identical estimates of $\sigma^2$.  This discrepancy is due to the difference in their ages; we explain the interaction between time, species number and disparity in greater detail in the text.}
\label{table:carnivora}
\end{table}



\section{Discussion: Comparative phylogenetics without trees?}


In this paper we have outlined a method for integrating over Yule trees.  We presented an expression for the distribution of PD in an unresolved tree, conditional on the number of species $n$ and age $t$.  We showed that the expected disparity can be represented in a similar way as PD, since it accumulates along the branches of a phylogenetic tree.  We also derived a statistical framework that uses a very small amount of information ($n$, $t$, and $\bar{D}_n$) for an unresolved clade to derive a meaningful estimator for the Brownian rate $\sigma^2$.  It may seem counterintuitive that one can estimate the Brownian rate for an unresolved tree with $n$ taxa, given a single disparity measurement.  However, the structure provided by the Yule process allows this inference by providing just enough information about the distribution of branching times that generate the tree to model the average phenotypic disparity under Brownian motion.  This permits analytic integration over two random objects: the collection of branching times of the tree and realizations of Brownian motion.  Three assumptions make this possible: first, we fix a topology $\tau$ without branch lengths; second, we assume that branch lengths come from a Yule process; third, we compute the distribution of expected disparity, which is a scalar quantity that encapsulates the most important information in the covariance matrix $\mathbf{C}(\tau)$.  In exchange for these assumptions, we gain what frustrated reviewers of Felsenstein's paper apparently wished for: an estimator that ``obviate[s] the need to have an accurate knowledge of phylogeny'' \citep{Felsenstein1985Phylogenies}.  Whether these assumptions are warranted depends fundamentally on the scientific questions at hand, and the available data.  

Perhaps the most satisfying use of our method is in providing an approximate and model-based answer to the questions posed by \citet{Ricklefs2006Time} and \citet{Bokma2010Time}, in their similarly-named papers.  Our answer is approximate because it substitutes an observation for an expectation in \eqref{eq:Dnbar}; it is model-based because we assume that trees arise from a Yule process and traits evolve Brownian motion.  Equation \eqref{eq:Dnbar} expresses a heuristic relationship explaining the origins of phenotypic disparity, which we reproduce here for emphasis:
\begin{equation}
 \bar{D}_n \sim \sigma^2 R_t(\mathbf{a}) .
\label{eq:Dnbar2}
\end{equation}
On the left-hand side is the observed disparity.  On the right-hand side, $R_t(\mathbf{a})$ serves as a scalar summary of tree shape --  it depends only on $n$, clade age $t$, and tree topology $\tau$.  This reveals that even when we restrict our attention to expected disparity under the simplest evolutionary models, the interaction between $n$, $t$ and the branching structure of the tree in $R_t(\mathbf{a})$ is complex, but the Brownian variance simply scales the tree-topological term.  We see that $\bar{D}_n$ scales linearly with $\sigma^2$ when $n$, $t$, and the topology $\tau$ are fixed.  However, changing one of $n$, $t$, or $\tau$ while holding $\sigma^2$ constant will induce a nonlinear change in $\bar{D}_n$.  We conclude that it is not possible to partition the time-dependent and speciation-dependent influences on the accumulation of trait variance in a simple way as suggested by \citet{Ricklefs2006Time} under the stochastic models we study in this paper.

As an inducement to spur research on analytic integration over trees, \citet{Bokma2010Time} offers a monetary reward for an expression for the distribution of sample variance from a birth-death tree with Brownian trait evolution on its branches.  We have solved a simpler version of Bokma's challenge by providing the distribution of expected trait variance for a specific topology under a pure-birth process.  The expression \citet{Bokma2010Time} seeks is difficult to find for two reasons: first, it would require analytic integration over discrete tree topologies; second, and more intuitively, integrating over both topologies and Brownian realizations would subsume the Brownian variance $\sigma^2$ on the right-hand side of \eqref{eq:Dnbar} into a nonlinear term that depended on $n$, $t$, and $\sigma^2$ in a very complicated way.  Alternatively, simulation-based approaches provide an appealing alternative method to integrate over trees and Brownian motions without requiring approximation of the disparity by its expectation.  Indeed, Bayesian methods exist to sample from the distribution of Yule trees, conditional on observed trait values at the tips, thereby providing both estimates of Brownian rates and phylogenies simultaneously, while using all the available trait data \citep{Drummond2012Bayesian}.

Tree-free comparative evolutionary biology comes at a price -- there are several important drawbacks to our approach.  First, even under the Yule model for speciation (with the correctly specified branching rate $\lambda$) and zero-mean Brownian motion for traits, integrating over all possible Yule trees introduces great uncertainty in estimates of $\sigma^2$.  Figure \ref{fig:simestimates} illustrates this issue: while the estimates of $\sigma^2$ eventually converge to the true value as the number of branch length and trait realizations becomes large, the variance in these estimates can be substantial for smaller datasets.  Furthermore, the assumption that $\bar{D}_n \sim \sigma^2 R_t(\mathbf{a})$ may be suspect if the number of traits analyzed is small enough that the mean trait disparity is a poor substitute for the expected disparity.  

We conclude with a mixed message about analytic integration over trees.  First, it is possible to derive meaningful estimators for parameters of interest under simple evolutionary models, if one is willing to make assumptions about the mean behavior of the models.  The estimates are usually reasonable, and may provide valuable insight into the basic properties of evolutionary change under these models -- even our simplistic analysis of Carnivora body size evolution reveals the complex interaction of clade age, species number, and evolutionary rate.   These estimates may be useful as starting points for more time-consuming simulation analyses.  Second, and more pessimistically, sophisticated analytic methods for integrating over trees cannot conjure evolutionary information from the data that is not there already.    As evolutionary biologists further refine our knowledge of the tree of life, the number of clades whose phylogeny is truly unknown may diminish, along with interest in tree-free estimation methods.



\section*{Acknowledgements}

We are extremely grateful to Michael Alfaro and Graham Slater for introducing us to the problem of finding the distribution of quantitative trait variance, providing the Carnivora dataset, and for helpful comments on the manuscript.  John  Welch and Eric Stone provided insightful criticism and suggestions.  FWC was supported by NIH grant T32GM008185, and MAS was supported by NIH grants R01 GM086887, HG006139,  and NSF grant DMS-0856099.


\appendix

\section{Markov rewards for Yule processes}

\label{app:reward}

In this Appendix, we prove one lemma and the two Theorems presented in the text.  In the first proof, we derive a representation of the forward equation for a Yule reward process.  Our development follows that given by \citet{Neuts1995Algorithmic}. 

\newtheorem{lem}{Lemma}
\begin{lem}
\label{lem:rewardodes}
In a Yule reward process $R_t = \int_0^t a_{Y(s)}\dx{s}$ with arbitrary positive rewards $a_1,a_2,\ldots$, the Laplace-transformed reward probabilities satisfy the ordinary differential equations
\begin{equation}
 \od{f_{mn}(r,t)}{t} = -(n\lambda+a_n r)f_{mn}(r,t) + (n-1)\lambda f_{m,n-1}(r,t)  .
\end{equation}
\end{lem}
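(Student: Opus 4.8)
The plan is to obtain the stated ODEs by an infinitesimal forward argument carried out directly on the Laplace transform, rather than by first deriving and then transforming a partial differential equation for $v_{mn}(x,t)$. Write
\begin{equation}
 f_{mn}(r,t) = \E\!\left[ e^{-rR_t}\,\mathbf{1}\{Y(t)=n\} \,\big|\, Y(0)=m \right],
\end{equation}
which is the Laplace transform in $x$ of $v_{mn}(x,t)$. Fix $r\ge 0$ and a small $h>0$. Using $R_{t+h}=R_t+\int_t^{t+h} a_{Y(s)}\dx{s}$ and conditioning on the history of the Yule process up to time $t$ (the Markov property), I would split according to the number of births in $[t,t+h]$: with probability $1-n\lambda h+o(h)$ there is no birth, in which case $\{Y(t+h)=n\}$ forces $Y(t)=n$ and the reward increment is \emph{exactly} $a_n h$; with probability $(n-1)\lambda h+o(h)$ there is exactly one birth, in which case $\{Y(t+h)=n\}$ forces $Y(t)=n-1$ and the reward increment lies between $a_{n-1}h$ and $a_n h$; two or more births contribute probability $o(h)$.

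Assembling these cases and using the Markov property gives
\begin{equation}
 f_{mn}(r,t+h) = (1-n\lambda h)\,e^{-a_n r h}\,f_{mn}(r,t) + (n-1)\lambda h\,f_{m,n-1}(r,t) + o(h),
\end{equation}
where on the one-birth event the factor $e^{-r\int_t^{t+h} a_{Y(s)}\dx{s}}$ has been replaced by $1+O(h)$ and absorbed into the $o(h)$ term after multiplication by the $O(h)$ birth probability. Expanding $(1-n\lambda h)e^{-a_n r h}=1-(n\lambda+a_n r)h+o(h)$, subtracting $f_{mn}(r,t)$, dividing by $h$ and letting $h\to 0^+$ yields
\begin{equation}
 \od{f_{mn}(r,t)}{t} = -(n\lambda+a_n r)f_{mn}(r,t) + (n-1)\lambda f_{m,n-1}(r,t),
\end{equation}
with the convention $f_{mn}\equiv 0$ for $n<m$ (the Yule process is non-decreasing) and initial conditions $f_{mm}(r,0)=1$, $f_{mn}(r,0)=0$ for $n\ne m$. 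The $n=m$ case reduces to $\od{f_{mm}}{t}=-(m\lambda+a_m r)f_{mm}$, which is consistent with $v_{mm}(x,t)=\delta(x-a_m t)e^{-m\lambda t}$.

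The one step that requires genuine care — and which I expect to be the main obstacle — is the treatment of the one-birth event: the reward accrued across a birth in $[t,t+h]$ is only $O(h)$, not $o(h)$, so one must argue explicitly that once multiplied by the $O(h)$ probability of such a birth its effect on $f_{mn}$ is $O(h^2)=o(h)$ and therefore invisible to the first-order equation. Making the $o(h)$ estimates uniform (so the contributions of the finitely many relevant states are controlled) and invoking the Markov property cleanly are the only other points needing attention; everything else is a routine Taylor expansion. As an alternative route that some readers may find more transparent, one can instead derive the transport equation $\partial_t v_{mn} = -a_n\,\partial_x v_{mn} - n\lambda\,v_{mn} + (n-1)\lambda\,v_{m,n-1}$ for the reward density (the form used by \citet{Neuts1995Algorithmic}) and then apply $\int_0^\infty e^{-rx}(\cdot)\dx{x}$; here $\int_0^\infty e^{-rx}\partial_x v_{mn}\dx{x}=r f_{mn}(r,t)-v_{mn}(0,t)$, and the boundary term vanishes because positive rewards and $t>0$ confine the support of $R_t$ to $[(\min_{k\ge m}a_k)t,\,(\max_{k\ge m}a_k)t]$, bounded away from $0$.
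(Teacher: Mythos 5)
Your proof is correct, but it follows a genuinely different route from the paper's. The paper works \emph{backward}: it conditions on the time of the first jump out of the initial state $m$, obtains a renewal-type integral equation for the Laplace-transformed cumulative reward probabilities $F_{mn}(r,t)$, differentiates to get the backward system $\partial_t f_{mn} = -(m\lambda + a_m r)f_{mn} + m\lambda f_{m+1,n}$, and only then passes to the stated forward form by invoking the equivalence of forward and backward solutions for birth processes (citing Grimmett and Stirzaker). You instead derive the forward equations directly via an infinitesimal decomposition over $[t,t+h]$, splitting on the value of $Y(t)\in\{n,n-1,\le n-2\}$ and Taylor-expanding $e^{-a_n r h}$. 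What your route buys is that it never needs the forward--backward equivalence step, which in the paper is the least self-contained part of the argument; what the paper's route buys is that the first-jump integral equation handles the point mass at $x=a_m t$ (the $\delta_{mn}H(x-a_mt)$ term) explicitly and avoids the $o(h)$ bookkeeping you rightly identify as the delicate point --- in particular the observation that the $O(h)$ reward accrued across a birth, once multiplied by the $O(h)$ probability of that birth, contributes only $O(h^2)$, and that the finitely many states $k\le n-2$ contribute uniformly $o(h)$. Both of those estimates are handled correctly in your write-up, and your closing remark about the alternative transport-equation derivation (with the boundary term $v_{mn}(0,t)$ vanishing because positive rewards bound the support of $R_t$ away from zero for $t>0$) is also sound. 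One cosmetic point: strictly you obtain the right-hand derivative as $h\to 0^+$; a one-line continuity remark upgrades this to the two-sided derivative.
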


\begin{proof}
Let $V_{mn}(x,t) = \Pr(R_t \leq x, Y(t)=n\mid Y(0)=m)$.  We can re-write this quantity in a more useful form by conditioning on the time of departure $u$ from state $m$, noting that the accumulated reward is $a_m u$, and then integrating over $u$.  If $m=n$ and no departure occurs, the accumulated reward is $a_m t$.  Putting these ideas together, we obtain
\begin{equation}
\begin{split}
 V_{mn}(x,t) &= \Pr(R_t\leq x, Y(s)=m\text{ for }0\leq s\leq t) \\
             & \quad + \int_0^t \Pr(m\to m+1\text{ at time } u) V_{m+1,n}(x-a_m u,t-u) \dx{u} \\
  &= \delta_{mn}e^{-m\lambda t} H(x-a_mt) + \int_0^t m\lambda e^{-m\lambda u}V_{m+1,n} (x-a_m u,t-u) \dx{u} .
\end{split}
\end{equation}
Now consider the Laplace transform $F_{mn}(r,t)$ of $V_{mn}(x,t)$, with respect to the reward variable $x$,
\begin{equation}
\begin{split}
 F_{mn}(r,t) &= \mathscr{L}\big[V_{mn}(x,t)\big](r) \\
 &= \int_0^\infty e^{-rx} V_{mn}(x,t) \dx{x} \\
  &= \delta_{mn}\frac{e^{-(m\lambda+a_m r)t}}{r} + \int_0^t m\lambda e^{-m\lambda u}\int_{a_m u}^\infty e^{-rx} V_{m+1,n} (x-a_m u,t-u)\dx{x} \dx{u} .
\end{split}
\end{equation}
Making the substitution $y=x-a_m u$ in the Laplace integral, we have 
\begin{equation}
\begin{split}
 F_{mn}(s,t) &= \delta_{mn} \frac{e^{-(m\lambda+a_m r)t}}{r} + \int_0^t m\lambda e^{-m\lambda u}\int_0^\infty e^{-r(y+a_m u)} V_{m+1,n} (y,t-u)\dx{y} \dx{u} \\
  &= \delta_{mn}\frac{e^{-(m\lambda+a_m r)t}}{r} + \int_0^t m\lambda e^{-(m\lambda+a_m r)u} F_{m+1,n}(r,t-u) \dx{u} .
\end{split}
\end{equation}
Now multiplying both sides by $e^{(m\lambda+a_m r)t}$ and differentiating with respect to $t$, we obtain
\begin{equation}
\begin{split}
 \pd{}{t} \left[e^{(m\lambda+a_m r)t} F_{mn}(r,t)\right] &= \pd{}{t} \int_0^t m\lambda e^{(m\lambda+a_m r)(t-u)} F_{m+1,n}(r,t-u) \dx{u} \\
 &= \pd{}{t} \int_0^t m\lambda e^{(m\lambda+a_m r)u} F_{m+1,n}(r,u) \dx{u} .
\end{split}
\end{equation}
Expanding the left-hand side by the product rule and using the fundamental theorem of calculus on the right, we find that 
\begin{equation}
 e^{(m\lambda+a_m r)t} \left( (m\lambda+a_m r) F_{mn}(r,t) + \pd{}{t}F_{mn}(r,t)\right) = e^{(m\lambda+a_m r)t}  m\lambda F_{m+1,n}(r,t) .
\end{equation}
Cancelling common factors and rearranging, we obtain the Kolmogorov backward equation,
\begin{equation}
 \pd{}{t} F_{mn}(r,t) = - (m\lambda+a_m r)F_{mn}(r,t) + m\lambda F_{m+1,n}(r,t) .
\label{eq:recur}
\end{equation}
However,
\begin{equation}
 rF_{mn}(r,t) = \mathscr{L}\left[\pd{}{x} V_{mn}(x,t)\right](r) = \mathscr{L}\big[v_{mn}(x,t)\big](r) = f_{mn}(r,t) .
\end{equation}
Plugging $rF_{mn}(r,t) = f_{mn}(r,t)$ into \eqref{eq:recur}, we find that the $f_{mn}(r,t)$ satisfy the same system of ordinary differential equations,
\begin{equation}
 \pd{}{t} f_{mn}(r,t) = - (m\lambda+a_m r)f_{mn}(r,t) + m\lambda f_{m+1,n}(r,t) .
\label{eq:yulebackward}
\end{equation}
These are the \emph{backward equations} for the Laplace transformed reward process.  To solve \eqref{eq:yulebackward}, we note that any solution to the forward equations is a solution to the backward equations in a birth process \citep{Grimmett2001Probability}.  Therefore, \eqref{eq:yulebackward} is equivalent to the forward system
\begin{equation}
 \pd{f_{mn}(r,t)}{t} = - (n\lambda + a_n r) f_{mn}(r,t) + (n-1)\lambda f_{m,n-1}(r,t) 
\end{equation}
for $n=m,m+1,m+2,\ldots$  This completes the proof.
\end{proof}


\section{Proof of Theorem \ref{thm:yule}}

\label{app:yule}

\begin{proof} 
Lemma \ref{lem:rewardodes} with $a_k=k$ for $k=0,1,\ldots$ gives 
\begin{equation}
 \pd{f_{mn}(r,t)}{t} = - n(\lambda + r) f_{mn}(r,t) + (n-1)\lambda f_{m,n-1}(r,t) .
\label{eq:yulerewardlt}
\end{equation}
Define $g_{mn}(r,s)$ to be the Laplace transform of $f_{mn}(r,t)$ with respect to the time variable $t$.  Transforming \eqref{eq:yulerewardlt} gives
\begin{equation}
 sg_{mn}(r,s) - \delta_{mn} = - n(\lambda + r) g_{mn}(r,s) + (n-1)\lambda g_{m,n-1}(r,s) .
\end{equation}
Letting $m=n$, we find that
\begin{equation}
 g_{mm}(r,s) = \frac{1}{s+m(\lambda+r)} .
\end{equation}
Next, we form a recurrence and solve for $g_{mn}(r,s)$ to obtain
\begin{equation}
\begin{split}
 g_{mn}(r,s) &= \frac{(n-1)\lambda}{s+n(\lambda+r)} g_{m,n-1}(r,s) \\
 &= \frac{(n-1)\cdots m \lambda^{n-m}}{\prod_{j=m+1}^n \big[s+j(\lambda+r)\big]} g_{m,m+1}(r,s) \\
 &= \frac{(n-1)!}{(m-1)!} \frac{\lambda^{n-m} }{\prod_{j=m}^n \big[s+j(\lambda+r)\big]} .
\end{split}
\end{equation}
We proceed via a partial fractions decomposition of the product in the denominator above,
\begin{equation}
\begin{split}
 g_{mn}(r,s) &= \lambda^{n-m} \frac{(n-1)!}{(m-1)!} \sum_{j=m}^n \left(\prod_{k\neq j} (\lambda+r)(k-j) \right)^{-1} \frac{1}{s+j(\lambda+r)} \\
   &= \frac{(n-1)!}{(m-1)!} \lambda^{n-m} \sum_{j=m}^n \frac{\left[\left(\prod_{k=m}^{j-1}(k-j)\right)\left(\prod_{k=j+1}^n(k-j)\right)\right]^{-1} }{(\lambda+r)^{n-m}} \frac{1}{s+j(\lambda+r)} \\
   &= \frac{(n-1)!}{(m-1)!} \lambda^{n-m} \sum_{j=m}^n \frac{\left[ (-1)^{j-m} (j-m)! (n-j)! \right]^{-1} }{(\lambda+r)^{n-m}} \frac{1}{s+j(\lambda+r)} \\
   &= \lambda^{n-m} \sum_{j=m}^n \binom{n-1}{j-1} \binom{j-1}{m-1} \frac{(-1)^{j-m}}{(\lambda+r)^{n-m}} \frac{1}{s+j(\lambda+r)} .
\end{split}
\end{equation}
when $n>m$.  Inverse transforming with respect to $s$, we obtain
\begin{equation}
 f_{mm}(r,t) = e^{-m(\lambda+r)t}
\end{equation}
and
\begin{equation}
  f_{mn}(r,t) = \lambda^{n-m} \sum_{j=m}^n \binom{n-1}{j-1} \binom{j-1}{m-1} \frac{(-1)^{j-m}}{(\lambda+r)^{n-m}} e^{-j(\lambda+r)t}
\label{eq:fmnrtyule}
\end{equation}
when $n>m$. Again inverse transforming \eqref{eq:fmnrtyule}, this time with respect to the Laplace reward variable $r$, we find that for $m=n$,
\begin{equation}
 v_{mm}(x,t) = \delta(x-mt)e^{-m\lambda t} 
\end{equation}
which is a point mass at $x=mt$.  For $n>m$,
\begin{equation}
  v_{mn}(x,t) = \frac{\lambda^{n-m} e^{-\lambda x}}{(n-m-1)!} \sum_{j=m}^n \binom{n-1}{j-1} \binom{j-1}{m-1} (-1)^{j-m} (x-j t)^{n-m-1}  H(x-j t) .
\end{equation}
This completes the proof.
\end{proof}

\section{Proof of Theorem \ref{thm:generalyule}}

\label{app:generalyule}

\begin{proof}
Lemma \ref{lem:rewardodes} with arbitrary rewards $a_k$, $k=1,2,\ldots$, gives
\begin{equation} 
\begin{split}
  \od{f_{mn}(r,t)}{t} &= -(n\lambda + a_n r) f_{mn}(r,t) + (n-1)\lambda f_{m,n-1}(r,t) .
\end{split}
\end{equation}
To solve the system, apply the Laplace transform with respect to time $t$.  First note that the transform of $f_{mm}(r,t)$ is
\begin{equation}
 g_{mm}(r,s) = \frac{1}{s + m\lambda + a_m r}. 
\end{equation}
Transforming the $n$th equation, and recalling that $f_{mn}(r,0) = 0$ for $n>m$,
\begin{equation}
\begin{split}
 sg_{mn}(r,s)-f_{mn}(r,0) &= -(n\lambda + a_n r) g_{mn}(r,s) + (n-1)\lambda g_{m,n-1}(r,s) \\
 g_{mn}(r,s)(s + n\lambda + a_n r) &= (n-1)\lambda g_{m,n-1}(r,s) \\
 g_{mn}(r,s) &= \frac{(n-1)\lambda}{s+n\lambda + a_n r} g_{m,n-1}(r,s) \\
  &= \frac{ (n-1)!}{(m-1)!} \frac{\lambda^{n-m}}{\prod_{j=m+1}^n (s + j\lambda + a_j r)} g_{mm}(r,s)  \\
    &= \frac{(n-1)!}{(m-1)!} \frac{\lambda^{n-m}}{\prod_{j=m}^n (s + j\lambda + a_j r)} .
\end{split}
\end{equation}
We expand the denominator by partial fractions to find 
\begin{equation}
 g_{mn}(r,s)  = \lambda^{n-m} \frac{(n-1)!}{(m-1)!} \sum_{j=m}^n \frac{\prod_{k\neq j} \big[\lambda(k - j) + r(a_k - a_j)\big]^{-1}}{s + j\lambda + a_j r} .
\end{equation}
Transforming back to the time domain, we have, for $m=n$,
\begin{equation}
f_{mm}(r,t) = e^{-(m\lambda + a_mr)t} .
\end{equation}
When $n>m$,
\begin{equation}
 f_{mn}(r,t)  = \lambda^{n-m} \frac{(n-1)!}{(m-1)!} \sum_{j=m}^n \frac{e^{-(j\lambda + a_jr)t}}{\prod_{k\neq j} \big(\lambda(k - j) + r(a_k - a_j)\big)} .
\label{eq:fmn}
\end{equation}
This completes the proof.  
\end{proof}


\section{Analytic and numerical inversion}

\label{app:inversion}

Analytic inversion of \eqref{eq:generalyule} in Theorem \ref{thm:generalyule} is possible, but unfortunately depends on the structure of the tree topology in unexpected ways.  One convenient property of the rewards $\mathbf{a}=(a_1,\ldots,a_n)$ is that $a_i<a_{i+1}$ for all $1\leq i \leq n-1$, a fact apparent from \eqref{eq:omeara}.  Therefore $a_i\neq a_j$ for distinct $i$ and $j$.  When $m=n$, no speciation events have taken place, and we have 
\begin{equation}
  v_{mm}(x,t) = \delta(x-a_mt) e^{-m\lambda t}.
\end{equation}
For $n=m+1$, there is only one distinct topology, so 
\begin{equation}
\begin{split}
  v_{m,m+1}(x,t) &= \frac{m\lambda e^{-m\lambda t}}{a_{m+1}-a_m}\Bigg[ \exp\left(-\frac{\lambda(x-a_m t)}{a_{m+1}-a_m}\right)H(x-a_m t) \\ 
  & \qquad\qquad- e^{\lambda t} \exp\left(-\frac{\lambda(x-a_{m+1}t)}{a_{m+1}-a_m}\right)H(x-a_{m+1} t) \Bigg] .
\end{split}
\end{equation}
In general, when 
\begin{equation}
\frac{\ell-j}{a_\ell-a_j} - \frac{k-j}{a_k-a_j} \neq 0
\label{eq:denom}
\end{equation}
for any $l$, $k$, or $j$ in ${1,\ldots,n}$, then \eqref{eq:fmn} becomes
\begin{equation}
\begin{split}
 f_{mn}(r,t)  &= \lambda^{n-m} \frac{(n-1)!}{(m-1)!} \sum_{j=m}^n \frac{e^{-(j\lambda + a_jr)t}}{\prod_{k\neq j} (a_k-a_j)\left(\frac{\lambda(k - j)}{a_k-a_j} + r\right)} \\
 &= \lambda^{n-m} \frac{(n-1)!}{(m-1)!} \sum_{j=m}^n \frac{e^{-(j\lambda + a_jr)t}}{\prod_{k\neq j} (a_k-a_j)} \sum_{k\neq j} \frac{\prod_{\substack{\ell\neq k\\ \ell\neq j}} \left(\frac{\lambda(\ell - j)}{a_\ell-a_j} - \frac{\lambda(k-j)}{a_k-a_j}\right)^{-1}}{\frac{\lambda(k - j)}{a_k-a_j} + r} ,
\end{split}
\label{eq:fmn2}
\end{equation}
and so the full probability density for $n>m$ is 
\begin{equation}
 v_{mn}(x,t) = \lambda^{2} \frac{(n-1)!}{(m-1)!} \sum_{j=m}^n \frac{e^{-j\lambda t}H(x-a_jt)}{\prod_{k\neq j} (a_k-a_j)} \sum_{k\neq j} \frac{ \exp\left[-\frac{\lambda(k - j)}{a_k-a_j}(x-a_jt)\right] }{\prod_{\substack{\ell\neq k\\ \ell\neq j}} \left(\frac{\ell - j}{a_\ell-a_j} - \frac{k-j}{a_k-a_j}\right)} 
\label{eq:vmngeneral}
\end{equation}
However, the rewards $\mathbf{a}$ for many topologies do not satisfy \eqref{eq:denom}.  This can be seen in Figure \ref{fig:badrewards}, where $m=2$ and $n=4$.  Then we see that when $j=2$, $k=4$, and $\ell=3$ in \eqref{eq:vmngeneral},
\begin{equation}
\frac{4-2}{a_4-a_2} = \frac{4-2}{0.75-0.5} = 8
\end{equation}
and
\begin{equation}
\frac{3-2}{a_3-a_2} = \frac{3-2}{0.75-0.625} = 8 ,
\end{equation}
so the denominator in the second sum in \eqref{eq:vmngeneral} is zero.  Unfortunately this happens whenever there is symmetry in the tree so that more than one pair of taxa have the same time of shared ancestry.  Note also that \eqref{eq:vmngeneral} does not reduce to \eqref{eq:generalyule} in Theorem \ref{thm:generalyule} when $a_k=k$ since the denominator in the summand of \eqref{eq:vmngeneral} is zero.

\begin{figure}
\begin{center}


\includegraphics{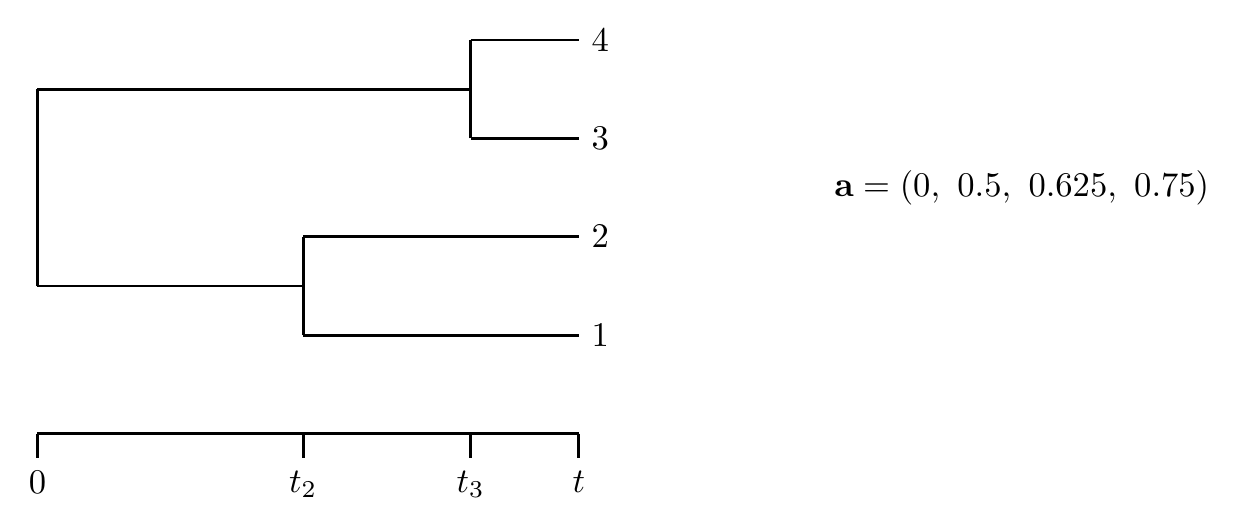}
\end{center}
\caption[Demonstration of a problematic reward vector]{Demonstration of a problematic reward vector $\mathbf{a}$ computed for the symmetric four-taxon tree.  In this case, the analytic inversion formula \eqref{eq:vmngeneral} cannot be applied, since the denominator in the sum becomes zero.  }
\label{fig:badrewards}
\end{figure}

Despite the difficulty in writing a general inversion to obtain $f_{mn}(x,t)$ for any topology, numerical inversion to arbitrary precision remains straightforward.  \citet{Abate1995Numerical} describe a numerical method for inverting the Laplace transform of probability densities by a discrete Riemann sum using the trapezoidal rule with step size $h$:
\begin{equation}
 v_{mn}(x,t) \approx \frac{e^{A/2}}{2x} \Re\left[f_{m,n}\left(\frac{A}{2x},t\right)\right] + \frac{e^{A/2}}{x} \sum_{k=1}^\infty (-1)^k \Re\left[f_{m,n}\left(\frac{A+2k\pi i}{2x},t\right)\right],  \\
\label{eq:quad}
\end{equation}
where we choose $A=20$.


\bibliographystyle{spbasic}
\bibliography{fcrawford}

\end{document}